\documentclass[12pt]{article}
\usepackage{mynote}
\pdfminorversion=4
\newcommand{\blind}{0}

\addtolength{\oddsidemargin}{-.5in}%
\addtolength{\evensidemargin}{-1in}%
\addtolength{\textwidth}{1in}%
\addtolength{\textheight}{1.7in}%
\addtolength{\topmargin}{-1in}%

\begin{document}

\def\spacingset#1{\renewcommand{\baselinestretch}%
{#1}\small\normalsize} \spacingset{1}


\if0\blind
{
  \title{\bf Inference on the Significance of Modalities in Multimodal Generalized Linear Models}
  \author{Wanting
    Jin$^1$, Guorong Wu$^2$, and Quefeng Li$^3$\\
    $^{1,3}$ Department of Biostatistics, Gillings School of Global Public Health, \\
    University of North Carolina at Chapel Hill\\
  $^2$ Department of Psychiatry, \\ University of North Carolina at Chapel Hill} 
  \date{}
  \maketitle
} \fi

\if1\blind
{
  \bigskip
  \bigskip
  \bigskip
  \begin{center}
    {\LARGE\bf Inference on the Significance of Modalities in Multimodal Generalized Linear Models}
\end{center}
  \medskip
} \fi

\bigskip
\begin{abstract}
 Despite the popular of multimodal statistical models, there lacks rigorous statistical inference
  tools for inferring the significance of a single modality within a multimodal model, especially in high-dimensional models. For high-dimensional multimodal generalized
  linear models, we propose a novel entropy-based metric, called the expected relative entropy, to quantify the information
  gain of one modality in addition to all other modalities in the model. We propose a deviance-based statistic to
  estimate the expected relative entropy, prove that it is consistent and its asymptotic distribution can be approximated by a non-central chi-squared
  distribution. That enables the calculation of confidence intervals and p-values to assess the significance of the expected
  relative entropy for a given modality. We numerically evaluate the empirical performance of our proposed inference tool by simulations and apply it to a multimodal neuroimaging dataset to demonstrate its good performance on various high-dimensional multimodal generalized linear models.  
\end{abstract}

\noindent%
{\it Keywords:}  High-dimensional inference, Multimodal data, Relative Entropy, Sure Independence Screening.  
\vfill

\newpage
\spacingset{1.9} 
\section{Introduction}
\label{sec:intro}
Multimodal data, which collect information from multiple sources or modalities of the same subjects, are fast emerging
in many areas, including artificial intelligence, biomedical research, finance, social science, etc. For example,
multimodal neuroimaging employs diverse imaging technologies such as Magnetic Resonance Imaging (MRI), Positron Emission
Tomography (PET) and Computed Tomography (CT) scan to study structural and functional changes in the brain. As different
imaging modalities characterizes different aspects of the brain, recent neuroimaging studies \citep{Uludag2014} have
demonstrated that fused multimodal models often outperform single-modal models in predicting disease phenotypes and
understanding disease progression mechanisms. Multi-omics data are another notable example of multimodal data in
biomedical research. \cite{Richardson2016} illustrated that the integrative analysis of multi-omics data can enhance our
understanding of the genomic mechanisms underlying various diseases. In recent statistics literature, many new
integrative models have been proposed for multimodal data analysis, including unsupervised
\citep{gaynanova2019structural,li2017incorporating,lock2013joint} and supervised learning methods
\citep{Quefeng2022,xue2021integrating}. There are also theoretical studies proving that, under certain conditions,
multimodal models are guaranteed to have better statistical performance than unimodal models \citep{li2018integrative}.

In multimodal data analysis, a critical yet challenging problem is on how to quantify the information contributed by
individual modalities within a multimodal model. This problem is particularly relevant in various scientific domains. For
instance, in genetics, heritability is a crucial concept that aims to determine the proportion of phenotypic variability
attributable to genetic variation, as opposed to environmental, behavioral, or other factors. In neuroscience, when treating
patients with dementia, physicians often employ multiple imaging modalities for diagnosis based on the patient's specific
condition. Therefore, assessing the informational value of various imaging modalities in the disease phenotype is crucial
for devising accurate treatment plans and preventing unnecessary diagnostic procedures. These practical problems raise the
great need for rigorous statistical methods to quantify the information contributed by individual modalities and assess its
statistical significance in multimodal statistical models.

In classical statistics literature, there were many metrics quantifying the goodness-of-fit of a statistical model. These
metrics can be regarded as the quantification of information contributed by all variables in a unimodal statistical
model. For instance, $R^2$ and adjusted-$R^2$ were used to quantify the proportion of variation explained by all variables
in a linear regression. These metrics were further extended to pseudo-$R^2$ metrics in the context of
generalized linear model \citep{Zheng2000}. \cite{Hu2006} studied the property of the
pseudo-$R^2$ metric under multinomial regression and proved that its limit can be interpreted as the relative entropy
between the full and the null models.
Unfortunately, these classic metrics are inadequate to measure and infer the contribution of individual modalities in
multimodal models. The reason is that multimodal models are often high-dimensional, as each modality can contain a large
number of variables. Consequently, inferring the significance of modalities constitutes a high-dimensional inference
problem. In recent years, many high-dimensional inference methods have been developed. For example,
\cite{javanmard2014confidence,Ning2017,VarderGeer2014,zhang2014confidence} have developed de-biased or de-sparsifying least
absolute shrinkage and selection operator (LASSO) methods and de-correlated score tests for testing the significance of a
subvector of the covariate coefficients in unimodal regression models. \cite{zhang2017simultaneous} further developed a test
for testing the whole covariate coefficients using the max norm of the desparsifying estimators.  \cite{shi2019linear} and
\cite{ZhuBradic2018} developed partially penalized methods for testing an arbitrary linear combination of covariate
coefficients in a sparse generalized linear model and a dense high-dimensional linear regression model.
Despite these advancements, existing inference methods are limited to the variable level and cannot be
directly applied to our specific problem of evaluating and comparing high-dimensional modalities.  This is because the
contribution of individual modalities depends not only on the magnitude of covariate coefficients, but also other population
parameters such as the correlation between different modalities. Therefore, inferring such a quantity needs to infer the
significance of a complicated function of covariate coefficients and some other population parameters, making the task more
challenging than inferring covariate coefficients alone. 

Recent years have also witnessed advancements in the inference of
covariate coefficient functions. For example, 
\cite{tony2020semisupervised} developed methods to infer the explained variance in a high-dimensional linear model, which is
a quadratic function of the covariate coefficients.  As to be discussed in Section \ref{sec:model}, our proposed metric is
more complicated than a quadratic function, which poses a more challenging high-dimensional inference problem. Besides, we
consider the inference problem in the context of multimodal models, instead of unimodal models.

In this article, we propose an entroy-based metric to quantify the contribution of individual modalities in multimodal
generalized linear models (MGLM). We justify such a metric by studying its mathematical properties and establishing connections with
conventional goodness-of-fit metrics for specific GLMs. We show that those conventional metrics can be
treated as special cases of our proposed metric. Then, we develop a deviance-based estimator for this metric and study its
statistical properties. We show that the estimator is consistent and provide an error bound. Moreover, we derive the
limiting distribution of the estimator, enabling us to construct confidence intervals and calculate p-values to assess the
significance of individual modalities. We show that such an asymptotic distributional result does not rely on variable
selection consistency, which makes it more flexible for high-dimensional models. To the best of our knowledge,
these inference tools are among the first to perform rigorous statistical inference at the modality level in the context of
MGLM. They address the critical need of inferring the significance of modalities in
multimodal models and can be applied to a wide range of multimodal data analysis problems.

The rest of the article is organized as follows. In Section \ref{sec:model}, we propose to use the Expected Relative Entropy
(ERE) as a metric for quantifying information contributed by individual modalities. We study its mathematical properties and
relate it to conventional goodness-of-fit metrics for unimodal regression models. We also propose a two-step, deviance-based
estimator for this metric. In Section \ref{sec:statistical-results}, we give the error bound of the estimator and derive its
asymptotic distribution, upon which the confidence interval and the p-value for the significance of a modality can be
calculated. We use simulation studies to assess the numerical performance of the proposed metric and the corresponding
estimator in Section \ref{sec:Num}. In Section \ref{sec:Num-neuro}, we apply our proposed method to analyze a multi-modal neuroimaging dataset to demonstrate
how it can applied to a real-world problem. We conclude our article with discussions on how our methods can be extended to
other multimodal regression models in Section \ref{sec:discuss}.

\section{Expected Relative Entropy}
\label{sec:model}
\subsection{Model Setup}
\label{sec:model-setup} 
Suppose there are $M$ modalities of variables. Let $\x_m \in \Rcal^{p_m}$ denote the vector of $p_m$ variables from the
$m$-th modality, $\x = (\x_{1}^T,\dots,\x_{M}^T)^T \in \Rcal^{p}$ and $p = \sum^{M}_{m=1}p_m$. Let $y$ denote the response
variable.  We assume that $y$ relates to $\x$ through a Multimodal Generalized Linear Model that the conditional
distribution of $y$ given $\x$ belongs to the canonical exponential family, which has the following density function
\begin{equation}
  \label{eq:1}
  p(y|\x) = \exp\{\phi^{-1} [ y\x^T\betabs - b(\x^T\betabs) ] + c(y)\},
\end{equation}
where $\betabs = (\betab_1^{*^T}, \dots \betab_M^{*^T})^T \in \Rcal ^p$ is the vector of true regression coefficients,
$\betab_m^* \in \Rcal^{p_m}$ is the vector of true regression coefficients in the $m$-th modality, $b(\cdot)$ and
$c(\cdot)$ are known functions, and $b(\cdot)$ is a twice continuously differentiable function with a positive second
derivative.  We assume the dispersion parameter $\phi=1 $ for simplicity. Such an assumption was also used other
high-dimensional GLM inference works \citep{Ning2017,ouyang2023high,Li02042024}. When $\phi$ is unknown in practice, we
can replace it with some consistent estimator. As proved in some previous works \citep{Quefeng2022}, the asymptotic
distribution of the resulting test statistic remains the same once such a consistent estimator is used.

Next, we propose a metric to quantify the information gain of the $m$-th modality in the MGLM. Let
$\calM=\{j:\beta^{*}_j\neq 0 \}$ be the set of indices for non-zero elements of $\betabs$ and $\calM_{-m}$ be the subset of
$\calM$ with indices outside the $m$-th modality. Excluding the $m$-th modality, we define an oracle working model that
regress $y$ on $\x_{\calM_{-m}}$, the subvector of $\x$ outside the $m$-th modality with nonzero coefficients, and let
\begin{equation}
  \label{eq:19}
\betab_{0_{\calM_{-m}}}=\argmax_{\betab_{\calM_{-m}}}~\E[\exp\{\phi^{-1} [ y\x_{\calM_{-m}}^T\betab_{\calM_{-m}} - b(\x_{\calM_{-m}}^T\betab_{\calM_{-m}}) ] + c(y)\}].
\end{equation}
In other words, $\betab_{0_{\calM_{-m}}}$ can be treated as the pseudo true coefficients for the oracle working model that
utilizes the true predictors outside the $m$-th modality. Finally, we expand $\betab_{0_{\calM_{-m}}}$ to be
$\betab_0=(\betab_{0_{\calM_{-m}}}^T,\zero^T)^T\in \mathcal{R}^p$ and define
\begin{equation*}
  p(y|\x_{-m})=\exp\{\phi^{-1} [ y\x^T\betab_0 - b(\x^T\betab_0) ] + c(y)\},
\end{equation*}
where $\x_{-m}$ is the $(p-p_m)$-dim subvector of $\x$ without variables in the $m$-th modality. 
The Kullback--Leibler (KL) divergence, also known as the relative entropy, measuring the disparity
between $p(y|\x)$ and $p(y|\x_{-m})$, is defined as
\begin{equation*}
D_{KL} ( p(y|\x)\|p(y|\x_{-m})) = \int_{y \in \mathcal{Y}}p(y|\x)\log \{{p(y|\x)}/{p(y|\x_{-m})} \}dy,
\end{equation*}
where $\mathcal{Y}$ is the sample space of $y$. The KL divergence has been widely used in statistics and information theory
\citep{Amari2016}. Our proposed KL divergence quantifies the entropy difference between the full
and reduced models, and is a random quantity depending on $\x$. To quantify the information gain of the
$m$-th modality, we propose to use
\begin{equation}
  \label{eq:3}
  H_m = 2\E_{\x}[D_{\text{KL}}(p(y|\x)\parallel p(y|\x_{-m}))],
\end{equation}
where the expectation is taken with respect to the distribution of $\x$. In other words, $H_m$ up to a factor of 2,
represents the Expected Relative Entropy (ERE) of models with and without the $m$-th modality. The factor of 2 is introduced
here to simplify the writing of the asymptotic results in Theorem \ref{thm:2}.
  Our paper's main goal is to develop valid tools to perform statistical inference on $H_m$.  As will be demonstrated
  in several concrete examples below, $H_m$ depends not only on the magnitude of coefficients, but also on other population
  parameters. Therefore, existing high-dimensional inference methods
  \citep{Ning2017,shi2019linear,VarderGeer2014,zhang2014confidence} cannot be directly applied to perform inference on
  $H_m$, which motivates us to develop new inference tools. In below, we will first justify why $H_m$ is an appropriate
metric for measuring the information gain of the $m$-th modality and connect it with existing goodness-of-fit measures for
conventional unimodal GLMs. In this way, our proposed metric can be viewed as a natural extension of
those existing measurements.  Next, we study some mathematical properties of $H_m$.

First, $H_m$ is non-negative, which is due to the non-negativity of the KL divergence. Second, Proposition \ref{pro1} shows that $H_m$ is monotone, which implies that the ERE contributed by
multiple modalities is always no less than that of one modality. The non-negativity and monotonicity properties demonstrate
that the ERE is an appropriate metric for quantifying the information gain of an individual modality in MGLM. Next, we calculate $H_m$ for some popular generalized linear models and show how it relates to the existing goodness-of-fit metrics.
\begin{pro}
  \label{pro1}
  If $\E|\log \{p(y|\x_{-m})/p(y|\x_{-(m,l)})\}| < \infty$, then it holds that $H_{(m,l)} \geq H_m$, where
  $H_{(m,l)}=2\E_{\x}[D_{\text{KL}}(p(y|\x)\parallel p(y|\x_{-(m,l)}))]$ and $\x_{-(m,l)}$ is the subvector for $\x$
  without variables in the $m$- and $l$-th modalities.
\end{pro}
\begin{ex}[Linear regression] \label{ex1}
  Suppose $ y = \sum_{m = 1}^{M} \x_m^T\betab_m^* + \epsilon$, where
  $\epsilon$ is the random error with $\E(\epsilon) = 0$ and $\mathrm{Var}(\epsilon) = \sigma^2_{\epsilon}$. If we further
  assume $\x \sim N_p(0, \Sigmab)$, then it follows that
  \begin{equation}
    \label{eq:example1}
    H_m = 
    \log\frac{\sigma_{m|-m}^2 + \sigma^2_{\epsilon} }{\sigma^2_{\epsilon}} + 
    \frac{\betab_m^{*T} \Sigmab_m \betab_m^* - \sigma_{m|-m}^2 }{\sigma_{m|-m}^2  + \sigma^2_{\epsilon}},
  \end{equation}
  where $\Sigmab_m = \mathrm{Cov}(\x_m)$ and $\sigma^2_{m|-m} = \mathrm{Var}(\x_m^T\betab_m^*|\x_{-m})$. The proof of
  (\ref{eq:example1}) can be found in the Appendix.
\end{ex}

  As shown in (\ref{eq:example1}), $H_m$ depends not only on the coefficient $\betabs_m$, but also on the covariance matrix
  $\Sigmab_m$, the conditional variance $\sigma^{2}_{m|-m}$, and $\sigma_{\epsilon}^2$. Thus, performing
  inference on $H_m$ presents greater challenges than inference on $\betabs_m$ alone. \cite{Quefeng2022} considered the
multimodal linear regression model and proposed to use the conditional variance $\sigma^2_{m|-m}$ to quantify the
contribution of the $m$-th modality in that model. They showed that $\sigma^2_{m|-m}$ can be interpreted as the improvement
of the prediction error or the increment of the adjusted $R^2$ that is attributed to the inclusion of the $m$-th modality in
addition to other modalities; see Section 6 of \cite{Quefeng2022}. Equation \eqref{eq:example1} gives a closed-form
expression describing the relationship between $\sigma^2_{m|-m}$ and our proposed metric $H_m$.

For a MGLM with the density function shown in (\ref{eq:1}), the KL divergence is
\begin{equation}
    \label{eq:gD_KL}
    D_{\text{KL}}(p(y|\x)\parallel p(y|\x_{-m})) = b'(\x^T\betabs)\x^T(\betabs - \betab_0) + b(\x^T\betab_0) -
    b(\x^T\betabs). 
\end{equation}
Using (\ref{eq:gD_KL}), we give more concrete forms of $H_m$ for logistic, exponential and Poisson
regressions. 

\begin{ex}[Logistic regression] \label{ex2}  
  For logistic regression, the conditional density of $y$ given $\x$ is $p(y|\x) = \exp[ y\x^T\betabs - \log \{1 +
  \exp(\x^T\betabs) \} ]$
  so that $b(\theta) = \log \{1 + \exp(\theta) \}$. It follows from \eqref{eq:gD_KL} that 
  \begin{align} \label{eq: Hm_log1}
    H_m &= 2 \E_{\x} \left\{\x^T(\betabs - \betab_0) \frac{\exp(\x^T\betabs)}{1 + \exp(\x^T\betabs)}  - \log \frac{1 +
          \exp(\x^T\betab_0)}{1 + \exp(\x^T\betabs)} \right\}. 
  \end{align}
  Next, we discuss how $H_m$ relates to some conventional goodness-of-fit metric of logistic regression. We observe that
  $p(y|\x)$ can be also expressed as
$
        p(y|\x) = \exp \{ I(y=1)\log P_1(\x) + I(y=0)\log P_0(\x)\},
$
where $P_1(\x) = P(y=1|\x) = \exp(\x^T\betabs)/ \{1 + \exp(\x^T\betabs) \}$,
$P_0(\x) = P(y=0|\x) = 1/\{1 + \exp(\x^T\betabs) \}$. 
If we view all covariates as one single modality, the reduced model without any covariate follows the Bernoulli distribution
with $P(y=1)=\E_{\x}\{P_1(\x)\}$, which is the marginal distribution of $y$ denoted by $p(y)$. Then, we have $\log p(y) = I(y=1) \log P(y=1) + I(y=0) \log P(y=0)$. Following our definition in
(\ref{eq:3}), the ERE of all variables is given by
\begin{equation}
  \label{eq: Hm_log2}
  \begin{split}
    H_{all} &= 2\E_{\x}[D_{\text{KL}}(p(y|\x)\parallel p(y)) ]
              = 2 [ \E_{\x}\{\log p(y|\x)\} - \E_{\x}\{\log p(y)\} ] \\
           &= 2 \left[\sum_{k=0}^1 \E_{\x}\{P_k(\x) \log P_k(\x)\} - \sum_{k=0}^1\E_{\x}\{P_k(\x)\}\log E_{\x}\{P_k(\x)\}
             \right].
  \end{split}
\end{equation}
In (\ref{eq: Hm_log2}), the term $\sum_{k=0}^1 \E_{\x}\{P_k(\x) \log P_k(\x)\}$ can be viewed as the
  entropy of the conditional distribution of $y$ given $\x$, while the term
  $\sum_{k=0}^1\E_{\x}\{P_k(\x)\}\log \E_{\x}\{P_k(\x)\}$ represents the entropy of the marginal distribution of $y$.  In
this case, we can verify that $H_{all}$ is the same as the entropy-based metric proposed by \cite{Hu2006} for quantifying
the goodness-of-fit of the conventional logistic regression model.  
Consequently, Hu's entropy-based metric can be
considered as a special case of our ERE metric in a classic unimodal setting that regards all covariates as a single
modality. 
\end{ex}

\begin{ex}[Exponential Regression] \label{ex3} For the exponential regression, the conditional density has the form of
  $ p(y|\x) = \exp \{ - y \x^T\betabs + \log (\x^T\betabs) \}$ so that $b(\theta) = - \log (\theta)$. Therefore, 
  \begin{equation*}
    H_m  = 2 \E_{\x} \left\{\log \frac{\x^T\betabs}{\x^T\betab_0} - \left(1 - \frac{\x^T\betab_{0}}{\x^T\betabs}\right)
    \right\}. 
    \end{equation*}
\end{ex}

\begin{ex}[Poisson Regression] \label{ex4} For the Poisson regression, the conditional density has the form of
  $p(y|\x) = \exp \{ y \x^T\betabs - \exp (\x^T\betabs) - \log (y!)\}$ so that $b(\theta) = \exp(\theta)$. Therefore, 
\begin{align*}
    \label{eq:D_KL_Poisson}
  H_m = 2 \E_{\x}\{\exp(\x^T\betabs)\x^T (\betabs-\betab_0)  + \exp(\x_{}^T\betab_0) - \exp(\x^T\betabs) \}. 
\end{align*}
\end{ex}

Unlike linear regression, the closed forms of $H_m$ for logistic, exponential and Poisson regressions are hard to be derived
due to their non-linear link functions, even when assuming $\x$ is multi-variate normal. However, in simulation studies,
when the distribution of $\x$ is known, we can use the Markov Chain Monte Carlo method to approximate $H_m$. More details of
such an approximation will be given in Section \ref{sec:Num}. 

\subsection{ Estimation of the Expected Relative Entropy}
\label{sec:estim-expect-relat}

In practice, suppose we have independent and identically distributed (i.i.d) samples $\{(Y_i, \X_i)\}_{i=1}^n$ from the
distribution of $(y,\x)$. We propose a deviance-based estimator of $H_m$, which is motivated by a seminal work
\citep{Hastie1987} that explored the relationship between the deviance and the KL divergence. The deviance of the full model
using all modalities is $D(\betab)=2\sum_{i=1}^n \{\log p(Y_i) - \log p(Y_i|\X_i;\betab)\}$, where $p(Y_i)$ denotes the
marginal density of $y$ and $p(Y_i|\X_i;\betab)$ denotes the conditional density of $y|\x$ with the parameter
$\betab\in \mathcal{R}^p$. The deviance of the reduced model using all but the $m$-th modality is
$D_{-m}(\betab_{0})=2\sum_{i=1}^n \{\log p(Y_i) - \log p(Y_i|\X_{i};\betab_0)\}$, where $p(Y_i|\X_i;\betab_0)$ denotes the
conditional density of $y|\x_{-m}$ with the parameter $\betab_{0_{-m}} \in \mathcal{R}^{p-p_m}$ in the reduced model and
$\betab_0=(\betab^T_{0_{-m}},\zero^T)^T\in \mathcal{R}^p$ being its expansion to a $p$-dim vector. Here, we slightly abuse
the notation $\betab_0$ by denoting it as an arbitrary parameter vector in the reduced model. Its true value is defined in
(\ref{eq:19}) using the same notation. \cite{Hastie1987} showed that the average deviance difference
$n^{-1}\{D(\betab)-D_{-m}(\betab_0) \}$ can be treated as a sample analog of the KL divergence
$D_{\text{KL}}(p(y|\x)\parallel p(y|\x_{-m}))$. This motivates us to estimate $H_m$ by
\begin{equation}
  \label{eq:Hh}
  \hat{H}_m = \frac{2}{n}\left\{\log L_n(\betabh) -  \log L_n(\betabh_0) \right\},
\end{equation}
where $ \log L_n(\betab) = \sum_{i=1}^n Y_i\X_i^T\betab - b(\X_i^T\betab) $ is the log-likelihood function and $\betabh$
and $\betabh_0$ are estimators of $\betab$ and $\betab_0$, respectively.

In the high-dimensional MGLM, we propose a two-step procedure to obtain $\betabh$ and $\betabh_0$. In the first step, we
employ a Sure Independence Screening \citep{Fan2010} method to screen the marginal effects of variables on the outcome and
threshold those effects to reduce the ultra-high dimensional problem to a moderate dimensional one. In the
second step, we solve a partially penalized maximum likelihood problem to obtain the final estimators.

In the screening step, define the Maximum Marginal Likelihood Estimator (MMLE) as
\begin{equation}
  \label{eq:8}
  \betabh_j^M = \argmin _{\betab_j} -\frac{1}{n}\sum_{i=1}^{n}l(Y_i, \Xt_{ij}^T\betab_j),
\end{equation}
for $j=1,\ldots,p$, where $l(Y_i,\Xt_{ij}^T\betab_j)= Y_i\Xt_{ij}^T\betab_j -b(\Xt_{ij}^T\betab_j)$,
$\betab_j = (\beta_{cj}, \beta_j)^T$, $\Xt_{ij} = (1,X_{ij})^T$, and $X_{ij}$ is the $j$-th element of $\X_i$.

The population counterpart of $\betabh_j^M$ is defined as
$\betab_j^M = \argmin _{\betab_j} \E  \{  -l(Y_1,\allowbreak \Xt_{1j}^T\betab_j) \}$. Then, we threshold $\betabh_j^M$ to obtain the set of
screened variables $\calMt=\{j: |\betah_j^M| \geq \gamma_n\}$, where $\gamma_n$ is a user-defined threshold level. We let
$\tilde{s}=|\calMt|$ be number of screened variables. As will be discussed in Section \ref{sec:statistical-results}, the
purpose of screening is to reduce the problem from an ultra-high dimensional one to a moderate dimensional one, while
ensuring that the screened set covers the true set, i.e., $\calMt\supset\mathcal{M}=\{j:\betas_j\neq 0 \}$, with high
probability. For that reason, we suggest setting a relatively small $\gamma_n$. In this paper, we use the following
procedure to choose the optimal $\gamma_n$. We conduct a grid search within a user-specified interval of thresholds. For
each grid point ${\gamma}^{\dagger}_n$, we regress the outcome $\Y$ on the subset of covariates
$\X_{\mathcal{M}^{\dagger}}$, where $\mathcal{M}^{\dagger}=\{j: |\betah_j^M|\geq \gamma_n^{\dagger} \}$, to obtain the
maximum likelihood estimator under that model. Then, we calculate the corresponding Bayesian Information Criterion (BIC)
value. It is important to note that the lower bound of the user-specified interval should be chosen such that
$|\mathcal{M}^{\dagger}|<n$, ensuring that the maximum likelihood estimators can be calculated. The optimal value of
$\gamma_n$ is chosen as the smaller one of the two thresholds that yields the smallest BIC value plus one standard error. We
remark that from a theoretical perspective, we only need to choose a threshold so that $\calMt \supset \mathcal{M}$ and the
size of $\calMt$ is controllable. For more details, see our discussion before Theorem \ref{thm:1}. In practice, some other
screening procedures can also be employed such as the Iterative Sure Independence Screening (ISIS) method \citep{Fan2010}.

In the second step, to estimate $\betab$, we propose to solve the following partially penalized maximization likelihood
problem:
\begin{equation}
  \label{eq:9}
  \betabh =  \argmin_{\betab_{\calMt^{\perp}}=\zero}  -\frac{1}{n} \sum_{i=1}^{n} \{Y_i\X_{i}^T\betab - b(\X_{i}^T\betab) \}
  + \sum_{j \in \calMt\backslash\calMt_m}p_{\lambda_1}(|\beta_j|), 
\end{equation}
where $\calMt_m = \{j: j \in \calMt \text{ and } j \in \text{the } m \text{-th modality}\}$ is the set of screened variables
in the $m$-th modality, $p_{\lambda_1}(|\beta_j|)$ is a folded-concave penalty function with a tuning parameter
$\lambda_1$. The folded concave penalty function $p_{\lambda}(|t|)$ is required to satisfy the following conditions: (i)
$p_{\lambda}(|t|)$ is non-decreasing and concave in $t\in \Rcal_{+}$ with $p_{\lambda}(0)=0$ and $p_{\lambda}(t)>0$ if
$t>0$; (ii) $p_{\lambda}(t)$ is differentiable for any $t\in \Rcal_+$; (iii) the first derivative $p'_{\lambda}(t)=0$ for
any $t\geq a\lambda$; (iv) $0\leq p'_{\lambda}(t)\leq \lambda$ for any $t\geq 0$. Examples of folded concave penalty
functions include the SCAD penalty \citep{fan2001variable} and the MCP penalty
\citep{zhang2010nearly}, 
among others. In (\ref{eq:9}), we only penalize the screened variables outside the $m$-th modality. Variables in
$\tilde{\calM}_m$ are not penalized in order to avoid introducing bias in their estimators. This has been shown to be
critical for correctly conducting corresponding statistical inference \citep{shi2019linear}. Moreover, as variables out of
$\tilde{\calM}$ have been screened out in the first step, $\betabh_{\calMt^{\perp}}=0$.

For the reduced model without the $m$-th modality, we solve another problem
\begin{equation}
  \label{eq:11}
  \betabh_0 = \argmin_{\substack{\betab_{\calMt^{\perp}}=\zero, \betab_{\calMt_m}=\zero}  }
  -\frac{1}{n} \sum_{i=1}^{n} [Y_i\X_{i}^T\betab -
  b(\X_{i}^T \betab)] +  \sum_{j \in\calMt\backslash \calMt_m}p_{\lambda_2}(|\beta_j|), 
\end{equation}
where $p_{\lambda_2}(|\beta_j|)$ is also a folded concave penalty function with a tuning parameter $\lambda_2$. In practice,
the same function can be used for $p_{\lambda_1}$ and $p_{\lambda_2}$, with the two parameters being separately tuned. We
utilize the SCAD penalty for both of them in our paper and choose optimal values of $\lambda_1$ and $\lambda_2$ that
minimize the BIC of the full and reduced models. In (\ref{eq:11}), we enforce all variables in the $m$-th modality to have
zero coefficients. Therefore, $\betabh_0=(\betabh^T_{0_{-m}},\zero^T)^T\in \Rcal^p$, where $\betabh_{0_{-m}}$ can be treated
as an estimator of $\betab_{0_{-m}}$ for the reduced model. Finally, we remark that the penalties used in (\ref{eq:9}) and
(\ref{eq:11}) further reduce the number of non-zero variables out of the $m$-th modality after the screening step. This
yields sparser estimators in the final stage and control the estimation bias. In Section \ref{sec:Num}, we will demonstrate that this extra regularization
also improves the accuracy of the statistical inference on $H_m$.

\section{Theoretical Results}
\label{sec:statistical-results}

In this section, we present two key statistical results on $\hat{H}_m$. We first prove that $\hat{H}_m$ is consistent to
$H_m$ and give the error bound. Then, we prove that the limiting distribution of $n\hat{H}_m$ follows a non-central
$\chi^2$-distribution, with a non-centrality parameter closely relates to $nH_m$. This distributional result enables us to
construct confidence intervals for $H_m$ based on $\hat{H}_m$, as well as computing p-values to assess the statistical
significance of a given modality. These findings meet the critical need on how to perform statistical inference at the
modality level for MGLM.

We first introduce some notations. For a vector $\a\in \mathcal{R}^d$, let $\supnorm{\a}=\max_j|a_j|$,
$\lonenorm{\a}=\sum_{j=1}^d |a_j|$, $\ltwonorm{\a}=(\sum_{j=1}^d a_j^2)^{1/2}$ denote its sup-norm, $L_1$-norm, and
Euclidean norm, respectively. For an index set $S$, let $\a_S$ denote the subvector of $\a$ with indices in $S$. In
particular, let the subscripts $m$ and $-m$ denote the sets of indices in and out of the $m$-th modality, respectively. Let
$|S|$ denote the size of $S$. For a square matrix $\A\in \mathcal{R}^{d\times d}$, let $\lambda_{\min}(\A)$ and
$\lambda_{\max}(\A)$ denote its minimal and maximal eigenvalues. Let $\supnorm{\A}=\sup_{ij}|a_{ij}|$,
$\lonenorm{\A}=\max_{1\leq j\leq d} \sum_{i=1}^d |a_{ij}|$, $\ltwonorm{\A}=\lambda_{\max}(\A)$,
$\fnorm{\A}=(\sum_{i,j} a_{ij}^2)^{1/2}$ denote its element-wise sup-norm, $L_1$-norm, $L_2$-norm, and Frobenious norm,
respectively. For index sets $S_1$ and $S_2$, let $\A_{S_1S_2}$ denote the submatrix of $\A$ with rows in $S_1$ and columns in
$S_2$. Finally, we introduce the
following qualities that will be used in the theoretical results. Recall the definitions of $\mathcal{M}$ and
$\tilde{\mathcal{M}}$ in Section \ref{sec:model}, let $s=|\mathcal{M}|$ and $\tilde{s}=|\tilde{\mathcal{M}}|$. Let
$\mathcal{M}_m $ and $\tilde{\mathcal{M}}_m$ be the subsets of $\mathcal{M}$ and $\tilde{\mathcal{M}}$ with indices in the
$m$-th modality. Define $s_m=|\mathcal{M}_m|$ and $\tilde{s}_m=|\tilde{\mathcal{M}}_m|$. To give the theoretical results, we
impose the following regularity conditions.

\con \label{cond1} $\x \in \Rcal^p$ is a sub-Gaussian random vector with parameter $\sigma^2$. That is, for any
$\mathbf{v} \in \Rcal^p$ with $\| \mathbf{v} \|_2 = 1$, $P(|\mathbf{v}^T\x| > t) \leq \exp \{-{t^2}/{(2\sigma^2)} \}$.
$c\leq \lambda_{\min}(\Sigmab)\leq \lambda_{\max}(\Sigmab) \leq C$, 
where $\Sigmab=\mathrm{Var}(\x)$, $c$ and $C$ are some generic positive constants.

\con \label{cond2} There exist generic positive constants $c$ and $C$ such that
$\E[\exp\{b(\x^T\betabs + c) \} - \exp\{b(\x^T\betabs) \}] + \E[\exp\{b(\x^T\betabs - c) \} -
\exp\{b(\x^T\betabs) \}] \leq C$.

\con \label{cond3} $\log p = O(n^a)$ for some $0 < a < 1$ and $\gamma_n = c_0 n^{-\kappa}$ for some $c_0>0$ and
$ 0 < \kappa < {(1-a)}/{2}$.

\con \label{cond4} The marginal Fisher information
$\I_j(\betab_j) = \E\{b''(\xt_j\betab_j) \xt_j\xt_j^T \}$ is finite and positive definite at 
$\betab_j = \betab_j^M$, where $\betab_j^M$ is defined below (\ref{eq:8}) and $\tilde{\x}_j=(1,x_j)^T$, $j = 1,\dots,p$. Moreover,
$\|\I_j(\betab_j)\|_2 < \infty$ for all
$\betab_j\in \mathcal{B}= \{\betab_j=(\beta_{cj},\beta_j)^T: |\beta_{cj}| \leq C, |\beta_j| \leq C \}$, where $C$ is a generic positive
constant.

\con \label{cond5} For $\betab_j \in \mathcal{B}$,
$\E\{l(y,\xt_j^T\betab_j)- l(y,\xt_j^T\betab_j^M) \} \geq c\|\betab_j-\betab_j^M \|_2^2$ for some generic positive
constant $c$, where the function $l(\cdot)$ is defined in (\ref{eq:8}).

\con \label{cond6} 
There exists a generic positive constant $C$, such that $|b''(\theta_1) - b''(\theta_2)| \leq C|\theta_1 - \theta_2|$ for
any $\theta_1$ and $\theta_2$. 

\con \label{cond7} There exists a positive constant $\delta$ such that, for some $0 < q < {1}/{2}-\kappa$,
$\sup_{\betab_j \in \mathcal{B}, \|\betab_j - \betab_j^M\|_2 \leq \delta} \E [ b(\tilde{\x}_j^T\betab_j) I(|x_j| >
Cn^{q})] = o(n^{-1})$.

\con \label{cond8} $\min_{j \in \calM^*} |\beta_j^M| \geq cn^{-\kappa}$ for some constants $c>0$ and
$0 < \kappa < {(1-a)}/{2}$.

\con \label{cond9} There exist some generic positive constants $c$ and $C$ such that
$c < \mathrm{Var} (\x^T \betabs ) < C$. Moreover, $\E[b''(\x^T\betabs)^{4}] < \infty$. 

\con \label{cond10} $\lambda_1 = c_1\sqrt{{(\log \tilde{s})}/{n}}$, $\lambda_2 = c_2\sqrt{{(\log \tilde{s})}/{n}}$ for some
positive constants $c_1, c_2$.

Condition~\ref{cond1} is a typical sub-Gaussian assumption for high-dimensional statistical problems. 
Condition~\ref{cond2}, which is analogous to the condition D from \cite{Fan2010}, 
ensures that the response variable has exponentially light tail. Condition~\ref{cond3} imposes assumptions on $p$ and the
threshold $\gamma_n$. Conditions \ref{cond4} and \ref{cond5} are assumptions about the marginal likelihood function, which
are analogous to the regularity conditions A' and C' imposed in \cite{Fan2010}.  Condition~\ref{cond6} ensures the smoothness of the
link function. \cite{Fan2010} 
proved that Conditions \ref{cond4} -- \ref{cond6} are satisfied for most common generalized
linear models.
Condition \ref{cond7} -- \ref{cond9} are some technical conditions needed in the proof. Condition~\ref{cond10}
gives the rate of a proper choice of tuning parameters.

Under Conditions \ref{cond1} -- \ref{cond8}, \cite{Fan2010} proved that the sure screening property holds with high
property, that is $P(\mathcal{M} \subset \hat{\mathcal{M}}) \rightarrow 1 $ as $n \rightarrow \infty$. Besides, they also
proved that $\tilde{s}=O_P\left(n^{2\kappa} \right)=o_P(n)$. These results show that the SIS can reach the sure screening
property with controllable size of screened variables. This is critical for us to establish the error bound of
$\hat{H}_m-H_m$ in below.

\begin{thm}
  \label{thm:1}
  Under Conditions~\ref{cond1} -- \ref{cond10}, it holds that
  \begin{equation*}
    \hat{H}_m - H_m = O_P(\max \{ {sn^{-1}\log n},{1}/{\sqrt{n}} \}).
  \end{equation*}
\end{thm}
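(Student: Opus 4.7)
The plan is to decompose $\hat H_m - H_m$ into an empirical-process fluctuation at the true parameters plus two estimation-error terms, and bound each piece separately. Define the normalized log-likelihood $\ell_n(\betab)=n^{-1}\sum_{i=1}^n\{Y_i\X_i^T\betab-b(\X_i^T\betab)\}$ and its population counterpart $\ell(\betab)=\E\{y\x^T\betab-b(\x^T\betab)\}$. Since the $c(Y)$ terms cancel in both $\hat H_m$ and $H_m$, write
\begin{equation*}
  \tfrac{1}{2}(\hat H_m-H_m)=\underbrace{\bigl[\ell_n(\hat\betab)-\ell_n(\betabs)\bigr]}_{T_1}-\underbrace{\bigl[\ell_n(\hat\betab_0)-\ell_n(\betab_0)\bigr]}_{T_2}+\underbrace{\bigl[\ell_n(\betabs)-\ell_n(\betab_0)-\ell(\betabs)+\ell(\betab_0)\bigr]}_{T_3}.
\end{equation*}
The term $T_3$ is an average of i.i.d.\ mean-zero random variables; under the sub-Gaussian $\x$ in Condition~\ref{cond1} and the moment control from Conditions~\ref{cond2} and~\ref{cond9}, Chebyshev or a simple CLT argument gives $T_3=O_P(n^{-1/2})$.

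The crux is the estimation-error term $T_1$. By a second-order Taylor expansion,
\begin{equation*}
T_1=\nabla\ell_n(\betabs)^T(\hat\betab-\betabs)+\tfrac{1}{2}(\hat\betab-\betabs)^T\nabla^2\ell_n(\tilde\betab)(\hat\betab-\betabs).
\end{equation*}
By the sure screening property cited above Theorem~\ref{thm:1}, $\mathcal{M}\subset\tilde{\mathcal{M}}$ and $\tilde s=O_P(n^{2\kappa})$ with high probability, so $\hat\betab-\betabs$ is supported on $\tilde{\mathcal M}$, a set of size $O_P(n^{2\kappa})$ with $\log\tilde s=O(\log n)$. Using standard oracle/restricted-strong-convexity results for partially-penalized folded concave GLMs (invoking Conditions~\ref{cond1}, \ref{cond6}, \ref{cond9} to ensure bounded Hessian curvature restricted to $\tilde{\mathcal M}$, Condition~\ref{cond8} for minimum signal, and the rate $\lambda_1\asymp\sqrt{\log\tilde s/n}$ from Condition~\ref{cond10}), one obtains $\|\hat\betab-\betabs\|_2=O_P(\sqrt{s\log n/n})$ and correct support recovery outside $\mathcal M$ with probability tending to one. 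Combined with $\|\nabla\ell_n(\betabs)_{\mathcal M}\|_2=O_P(\sqrt{s/n})$ (Bernstein on each coordinate plus a union bound over the size-$s$ support), the linear term is $O_P(s\sqrt{\log n}/n)$ and the quadratic term is $O_P(s\log n/n)$, yielding $T_1=O_P(sn^{-1}\log n)$.

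For $T_2$, the same Taylor argument is applied at the pseudo-true $\betab_0$ defined in \eqref{eq:19}. The first-order condition in \eqref{eq:19} implies $\E[\{y-b'(\x^T\betab_0)\}\x_{\calM_{-m}}]=\zero$, and since both $\hat\betab_0$ and $\betab_0$ have zero components in the $m$-th modality, the gradient piece in the expansion only touches coordinates outside the $m$-th modality and is still a mean-zero empirical average. Replicating the oracle-rate argument for the penalized program \eqref{eq:11} (with the role of $\betabs$ played by $\betab_0$ and the analogous Hessian positive-definiteness, which again follows from Conditions~\ref{cond1}, \ref{cond6}, \ref{cond9}) gives $\|\hat\betab_0-\betab_0\|_2=O_P(\sqrt{s\log n/n})$ and hence $T_2=O_P(sn^{-1}\log n)$. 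Summing the three bounds produces the stated rate $O_P(\max\{sn^{-1}\log n,n^{-1/2}\})$.

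The main obstacle I anticipate is establishing the $L_2$ estimation-error rate $\sqrt{s\log n/n}$ for $\hat\betab$ and $\hat\betab_0$ in the present partially-penalized MGLM setting. In particular, I must (i) invoke sure screening to collapse the ambient dimension $p$ to the moderate dimension $\tilde s=O_P(n^{2\kappa})$ so that $\log\tilde s=O(\log n)$, (ii) verify a restricted-eigenvalue/RSC condition for the sample Hessian on $\tilde{\mathcal M}$ using sub-Gaussian $\x$ and the Lipschitz $b''$ from Condition~\ref{cond6}, and (iii) handle the fact that variables in the $m$-th modality are unpenalized while those outside are penalized by a folded concave SCAD/MCP, so that oracle-type results apply jointly—these are the technical steps that carry the $\log n$ factor in the final rate.
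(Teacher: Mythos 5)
Your decomposition is exactly the paper's: the same three terms, with the centered empirical average $T_3$ handled by Chebyshev to get $O_P(n^{-1/2})$, and the two estimation-error terms handled by a second-order Taylor expansion combined with restricted-strong-convexity rates for $\betabh_{\calMt}-\betabs_{\calMt}$ after sure screening has collapsed the dimension to $\tilde s=O_P(n^{2\kappa})$. The overall architecture is sound and matches the paper. The problem lies in how you control the linear term of the Taylor expansion.

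You expand around the (pseudo-)true parameter, so the gradient is $\nabla\ell_n(\betabs)$, and you pair its $\ell_2$-norm restricted to $\calM$ with $\|\betabh-\betabs\|_2$. But $\betabh-\betabs$ is supported on $\calMt$, not $\calM$, and $\calMt$ generically contains $O_P(n^{2\kappa})$ false positives. To restrict to $\calM$ you invoke ``correct support recovery outside $\calM$,'' i.e., variable-selection consistency of the partially penalized estimator. That does not follow from Conditions~\ref{cond1}--\ref{cond10}: Condition~\ref{cond8} is a beta-min condition on the \emph{marginal} coefficients $\beta_j^M$ used only for the screening step, and no irrepresentable-type condition is assumed --- indeed the paper explicitly stresses that its results do not rely on selection consistency. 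The same issue is worse for $T_2$: the first-order condition defining $\betab_0$ in \eqref{eq:19} gives $\E[\{y-b'(\x^T\betab_0)\}x_j]=0$ only for $j\in\calM_{-m}$, so on the screened coordinates $j\in\calMt_{-m}\setminus\calM_{-m}$ the gradient $\nabla\ell_n(\betab_0)$ is \emph{not} a mean-zero average and carries a bias your Bernstein/Cauchy--Schwarz bound ignores. The fix is what the paper does: expand so that the gradient is evaluated at the estimator, use the KKT stationarity condition of \eqref{eq:9} (resp.\ \eqref{eq:11}) to get $\sup_{j\in\calMt}|S_n(\betabh)_j|\le\lambda_1=O(\sqrt{(\log\tilde s)/n})$ with $S_n(\betabh)_j=0$ off $\calMt$ by screening, and pair this with the $\ell_1$ error bound $\|\betabh_{\calMt}-\betabs_{\calMt}\|_1=O_P(s\sqrt{(\log\tilde s)/n})$ via H\"older, yielding $O_P(s n^{-1}\log n)$ for the linear term with no support-recovery or mean-zero claim. (Keeping your expansion point, H\"older with $\|\nabla\ell_n(\betabs)_{\calMt}\|_\infty=O_P(\sqrt{(\log\tilde s)/n})$ and the $\ell_1$ bound would also repair $T_1$, but not $T_2$, because of the bias just described.)
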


Theorem 1 implies that $\hat{H}_m$ is consistent with $H_m$ under the sparsity assumption that $s=o(n/\log n)$ with a
convergence rate no faster than the typical root-n rate. Next, we derive the asymptotic distribution of $\hat{H}_m$. This
task is inherently a high-dimensional inference problem. The bulk of existing high-dimensional inference methods focus on
testing the significance of covariate coefficients in a high-dimensional regression model, or their linear combinations
\citep{javanmard2014confidence,Ning2017,shi2019linear,VarderGeer2014,zhang2014confidence,ZhuBradic2018}. However, as seen in
Section \ref{sec:model-setup}, $\hat{H}_m$ and $H_m$ both depend on some complex functions of the covariate coefficients and
other populations parameters. \cite{tony2020semisupervised} have developed some methods to infer the explained variance in a
high-dimensional linear model, where the explained variance is a quadratic function of the covariate coefficients, which
differs from our proposed metric. Moreover, their inference problem is in the context of a linear model. Hence, their
results are not applicable to our problem. Furthermore, unlike the hypothesis testing problems considered in previous works, which
focus more on the limiting distribution of the test statistic under the null hypothesis, we aim to derive the limiting
distribution of $\hat{H}_m$ for all cases of $H_m$. This distinction introduces additional complexity to our inference
task. Nevertheless, we find that we can still borrow ideas from existing works such as \cite{shi2019linear} to obtain the
limiting distribution of $\hat{H}_m$. Besides, since we adopt a two-step procedure in obtaining $\hat{H}_m$, the first
screening step can efficiently reduce the high-dimensional inference problem to a moderate-dimensional one and control the 
estimation bias introduced by high-dimensionality and simplify the statistical inference. We find that due
to the sure screening property, our asymptotic distributional result no longer needs to be built upon variable selection
consistency. As to be seen in conditions of Theorem \ref{thm:2}, we don't require any irrepresentable condition, which is
different from the conditions needed in \cite{shi2019linear}.
To obtain the asymptotic distribution of $\hat{H}_m$, we require the following additional regularity conditions:
\con \label{cond11} $ \|\betabs_{\calM_m}\|_2 = O(n^{\kappa-1/2})$. 

\con \label{cond12} $\max_{j} \sup_{\betab \in \mathcal{N}_0} \lambda_{\max}\{ \bX ^T \diag \{ |X_j| \circ b'''(\bX \betab)\} \X \} = O_P(n)$, and \\
$\sup_{\betab \in \mathcal{N}_0} \allowbreak \lambda_{\max}( \E\{b''(\x^T\betab) \x\x^T\})  < \infty$, where 
$\mathcal{N}_0 = \{ \betab \in \mathcal{R}^p : \| \betab - \betabs\|_2\ \leq C \sqrt{n^{-1} s \log n},~ \allowbreak \betab_{\mathcal{\Tilde{M}}^{\perp}}=\zero  \}$ for some constant $ C > 0$.

\con \label{cond13} The constant $\kappa$ in Conditions \ref{cond3}, \ref{cond7}, \ref{cond8}, and \ref{cond11} satisfies
$0< \kappa < {1}/{6}$.

Condition \ref{cond11} requires the magnitude of $\betabs_{\calM_m}$ to be controllable. Condition \ref{cond12} is a
technical condition needed in the proof. Its first component is analogous to Condition (A1) in \cite{shi2019linear} and has
been shown to hold with probability tending to 1 under the sub-Gaussian assumption on $\X$ in their paper. Condition
\ref{cond13} imposes a more stringent requirement on $\kappa$ compared to its counterpart in Theorem \ref{thm:1}. This is
expected, as deriving the asymptotic distribution typically requires stronger conditions than proving consistency, which has
been extensively discussed in the literature, e.g., in \cite{shi2019linear}.

\begin{thm}
  \label{thm:2}
  Under Conditions~\ref{cond1} -- \ref{cond13},
  \begin{equation}
    \label{eq:7}
    \sup_{x>0} |\textrm{P}( n\hat{H}_m \leq x) - \textrm{P}(\chi_{\Tilde{s}_m,\Gamma_n}^2 \leq x)| = o_P(1),
  \end{equation}
   where $\Gamma_n = n \betab_{\calMt_m}^{*T}\Omegab_{mm}^{-1} \betabs_{\calMt_m}$ and $\Omegab_{mm} = (\zero, \I_{\calMt_m}) [\E\{b''(\x^T\betabs)\x_{\calMt}\x_{\calMt}^T\}]^{-1} (\zero, \I_{\calMt_m})^T$.

\end{thm}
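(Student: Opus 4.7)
The plan is to reduce $n\hat{H}_m = 2\{\log L_n(\hat{\betab}) - \log L_n(\hat{\betab}_0)\}$ to a Wilks-type quadratic form in an efficient score vector and then apply a multivariate CLT together with a smoothness bound on the non-central chi-squared CDF to promote weak convergence to the sup-norm statement in \eqref{eq:7}.

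First, I would work on the high-probability event $\mathcal{E}_n = \{\mathcal{M}\subset\tilde{\mathcal{M}}\}$, which is guaranteed by Conditions \ref{cond1}--\ref{cond8} and restricts both optimization problems \eqref{eq:9} and \eqref{eq:11} to the screened set $\tilde{\mathcal{M}}$ with $\tilde{s} = O_P(n^{2\kappa}) = o_P(n^{1/3})$ under Condition \ref{cond13}. On $\mathcal{E}_n$ the true support $\mathcal{M}$ lies inside $\tilde{\mathcal{M}}$, so $\hat{\betab}$ and $\hat{\betab}_0$ may be treated as partially penalized estimators in an $\tilde{s}$-dimensional GLM with unpenalized block $\tilde{\mathcal{M}}_m$ and SCAD-penalized block $\tilde{\mathcal{M}}\setminus\tilde{\mathcal{M}}_m$. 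A standard analysis of such estimators (compare \cite{shi2019linear}) gives $\|\hat{\betab}-\betabs\|_2 = O_P(\sqrt{\tilde{s}/n})$, and the KKT conditions yield a first-order expansion of $\hat{\betab}-\betabs$ in terms of the restricted score, with the SCAD penalty contribution of order $o_P(n^{-1/2})$ under Condition \ref{cond10}.

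Next, I would expand $\log L_n$ to second order around $\betabs$ for $\hat{\betab}$ and around the reduced-model pseudo-truth $\betab_0$ for $\hat{\betab}_0$. Under Conditions \ref{cond6} and \ref{cond12}, the Hessian concentrates at its population limit uniformly over a $\sqrt{\tilde{s}/n}$-neighborhood, and the third-order remainder is $o_P(1)$ under the restriction $\kappa<1/6$. Substituting the first-order characterizations and subtracting the two expansions, the nuisance block $\tilde{\mathcal{M}}\setminus\tilde{\mathcal{M}}_m$ cancels up to $o_P(1)$ via the Schur-complement identity built into $\Omegab_{mm}$, giving
\begin{equation*}
n\hat{H}_m = \mathbf{Z}_n^T\,\Omegab_{mm}^{-1}\,\mathbf{Z}_n + o_P(1),
\end{equation*}
where $\mathbf{Z}_n$ is a normalized version of the efficient score for $\betab_{\tilde{\mathcal{M}}_m}$, which can be taken as $\sqrt{n}\,\hat{\betab}_{\tilde{\mathcal{M}}_m}$ plus lower-order terms. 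Under Conditions \ref{cond1}, \ref{cond9}, and \ref{cond11}, a multivariate CLT then gives that $\mathbf{Z}_n$ is approximately $N(\mu_n,\Omegab_{mm})$ with $\mu_n^T\Omegab_{mm}^{-1}\mu_n = \Gamma_n$, so $\mathbf{Z}_n^T\Omegab_{mm}^{-1}\mathbf{Z}_n$ is asymptotically non-central chi-squared with $\tilde{s}_m$ degrees of freedom and non-centrality $\Gamma_n$. The sup-norm statement \eqref{eq:7} follows by combining a Berry--Esseen-type bound for $\mathbf{Z}_n$ with the boundedness of the non-central chi-squared density, so that the $o_P(1)$ residual in the statistic translates into an $o_P(1)$ gap in the CDF uniformly in $x$.

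The main obstacle is the second step: showing that the SCAD penalty bias in the nuisance block does not leak into the efficient score, and that all second- and third-order Taylor remainders are $o_P(1)$ without invoking variable selection consistency or any irrepresentable condition. The key technical device is Condition \ref{cond13} ($\kappa<1/6$), which ensures that cross-terms such as $\tilde{s}^2/\sqrt{n}$ and $\tilde{s}\lambda_1\sqrt{\tilde{s}/n}$ are asymptotically negligible. A secondary complication is that $\Gamma_n$ and $\tilde{s}_m$ are themselves random and may diverge with $n$, so the CDF comparison has to be established uniformly in these parameters rather than by passing to a fixed limit.
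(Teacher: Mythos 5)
Your overall architecture---restrict to the sure-screening event, use the KKT conditions to make the SCAD terms negligible, reduce $n\hat{H}_m$ to a quadratic form in a score-type vector of dimension $\tilde{s}_m$, and then apply a Gaussian approximation over convex sets so the $o_P(1)$ remainder transfers uniformly to the CDF---matches the paper's. The gap is in the reduction step, and it sits exactly where the non-centrality parameter comes from. You expand $\log L_n(\betabh)$ around $\betabs$ but $\log L_n(\betabh_0)$ around the reduced-model pseudo-truth $\betab_0$. Subtracting expansions taken at two different centers does not yield $\mathbf{Z}_n^T\Omegab_{mm}^{-1}\mathbf{Z}_n+o_P(1)$ with $\mathbf{Z}_n\approx\sqrt{n}\,\betabh_{\calMt_m}$: it leaves behind the term $2\{\log L_n(\betabs)-\log L_n(\betab_0)\}$, whose mean is $nH_m$ and whose fluctuation is of order $n^{\kappa}$, i.e.\ of the same order as the spread of $\chi^2_{\tilde{s}_m,\Gamma_n}$, so it can be neither dropped nor handled separately from the other pieces. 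Moreover, by Proposition \ref{pro3} one only has $nH_m\geq\Gamma_n+o(n)$ in general; the equality needs the extra assumption $\|\betabs_{-m}-\betab_{0_{-m}}\|_2=o(n^{\kappa-1/2})$, which Theorem \ref{thm:2} does not impose. So, carried out as described, your Schur-complement cancellation is not available, and the route would either deliver the wrong non-centrality or require a hypothesis the theorem does not have---even though the object you ultimately name ($\sqrt{n}\,\betabh_{\calMt_m}$ with mean $\sqrt{n}\,\betabs_{\calMt_m}$ and covariance $\Omegab_{mm}$) is the right one.

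The paper avoids this by never centering at $\betab_0$. Lemma \ref{lemma5} Taylor-expands the estimating equations of both (\ref{eq:9}) and (\ref{eq:11}) around the \emph{same} point $\betab^*_{\calMt}$, which is legitimate because Condition \ref{cond11} makes the reduced model only locally misspecified; the constraint $\betabh_{0_{\calMt_m}}=\zero$ enters through the exact identity $\Q_m^T(\betabh_{0_{\calMt}}-\betab^*_{\calMt})=-\betabs_{\calMt_m}$, and this is what injects the drift $\sqrt{n}\,\Omegab_{mm}^{-1/2}\betabs_{\calMt_m}$ and hence $\Gamma_n$. One then shows $n\hat{H}_m=\|\J_{n_{\calMt}}^{1/2}\sqrt{n}(\betabh_{\calMt}-\betabh_{0_{\calMt}})\|_2^2+o_P(1)$ (the gradient and third-order terms are killed by the KKT bounds and $\kappa<1/6$, as you anticipated), uses the projection matrix $\P_m$ to rewrite this as $\|\W+\sqrt{n}\,\Omegab_{mm}^{-1/2}\betabs_{\calMt_m}\|_2^2$ after a matrix-Bernstein step replacing the sample information block by $\Omegab_{mm}$, and applies the convex-sets Gaussian approximation of \cite{shi2019linear}, which already accommodates the growing, data-dependent $\tilde{s}_m$ and $\Gamma_n$ that you flag at the end. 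If you re-center your second expansion at $\betabs$ and track the constrained block explicitly in this way, your argument becomes the paper's; as written, the key identity is asserted rather than derivable from the stated expansions.
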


Theorem 2 shows that the limiting distribution of $n \hat{H}_m$ can be approximated by the non-central $\chi^2$-distribution
with $\tilde{s}_m$ degrees of freedom and a non-centrality parameter of $\Gamma_n$. Next, we establish the relationship between
$H_m$ and $\Gamma_n$ in the following proposition. 
\begin{pro}\label{pro3} Under Conditions~\ref{cond1}, \ref{cond6}, \ref{cond11} and \ref{cond12}, it holds that\\
  (a) $n H_m \geq \Gamma_n + o(n)$; (b) Furthermore, if $\|\betabs_{-m} - \betab_{0_{-m}}\|_2 =o(n^{\kappa-\frac{1}{2}})$,
  then $n H_m = \Gamma_n + o(n)$.
\end{pro}

Theorem \ref{thm:2} and Proposition \ref{pro3} together enable us to construct a confidence interval for $H_m$. The problem
essentially lies in constructing a confidence interval for the non-centrality parameter of the non-central
$\chi^2$-distribution based on a point estimator, a topic previously studied by \cite{KENT1995147}. 
Inspired by this work, we propose to use the pivotal quantity $F_{\tilde{s}_m,\theta}(n\hat{H}_m)$ to construct the
confidence interval, where $F_{\tilde{s}_m,\theta}(t)$ denotes the cumulative distribution function of the non-central
$\chi^2$-distribution with $\tilde{s}_m$ degrees of freedom and a non-centrality parameter of $\theta$. It can be
verified that $F_{\tilde{s}_m,\theta}(t)$ is a strictly decreasing function of $\theta$ for any fixed $t$, and
$F_{\tilde{s}_m,\theta}(n\hat{H}_m)$ asymptotically follows a uniform distribution in $[0,1]$ as $n\hat{H}_m$
asymptotically follows the distribution of $F_{\tilde{s}_m,\theta}$. Suppose we can find a real number $C_l$ such that
$F_{\tilde{s}_m,C_l}(n\hat{H}_m)=1-\alpha$. Then, setting $\theta=\Gamma_n+o(n)$, we have
$ P(C_l/n\leq H_m) \geq P(C_l\leq \theta) = P(F_{\tilde{s}_m,\theta}(n\hat{H}_m)\leq F_{\tilde{s}_m,C_l}(n\hat{H}_m))
=1-\alpha $, where Proposition \ref{pro3}(a) is used in the first inequality.  Thus, the $(1-\alpha)$ confidence lower
bound of $H_m$ is given by $[C_l/n,+\infty)$. With the additional assumption that
$\|\betabs_{-m} - \betab_{0_{-m}}\|_2 =o(n^{\kappa-\frac{1}{2}})$, following the same argument as above and applying
Proposition \ref{pro3}(b), the $(1-\alpha)$ confidence interval of $H_m$, up to an $o(1)$ term, is given by
$[C_l/n,C_u/n]$, where $C_l$ and $C_u$ satisfy $F_{\tilde{s}_m,C_l}(n\hat{H}_m)=1-\alpha/2$ and
$F_{\tilde{s}_m,C_u}(n\hat{H}_m)=\alpha/2$.  Since the form of the cumulative distribution function of the non-central
$\chi^2$-distribution is complicated, directly solving for $C_l$ and $C_u$ is computationally challenging. In practice,
we propose to use a bi-section method along the real line to find $C_l$ and $C_u$ such that
$F_{\tilde{s}_m,C_l}(n\hat{H}_m)\approx 1-\alpha/2$ and $F_{\tilde{s}_m,C_u}(n\hat{H}_m)\approx \alpha/2$. Moreover,
Theorem \ref{thm:2} also enables us to test if $H_m$ is zero. Under the null hypothesis that $H_m=0$, it can be shown
that $n\hat{H}_m$ asymptotically follows the standard $\chi^2_{\tilde{s}_m}$-distribution. Thus, the p-value is given by
$P(\chi^2_{\tilde{s}_m}>n\hat{H}_m)$.

Finally, we remark that unlike some high-dimensional inference methods \citep{shi2019linear}, Theorem \ref{thm:2} does not
require variable selection consistency for either $\betabh$ or $\betabh_0$, which can be seen from the absence of the
irrepresentable condition in Theorem \ref{thm:2}. The simulation results in Section \ref{sec:Num} further demonstrate this
point as our proposed method still gives good coverage probabilities, even when the estimators contain false positives or
false negatives.

In practice, quantifying the information gain of individual modalities in multimodal models as a percentage is often
useful. To address this need, we propose the following pseudo-$R^2$ metric: $R_m^2=1-\exp(-H_m)$, which can be estimated by
$\hat{R}_m^2=1-\exp(-\hat{H}_m)$. It can be easily shown that $R_m^2\in [0,1]$ as $H_m$ is non-negative. Given that the transformation function mapping $H_m$ to $R^2_m$ is strictly monotonically increasing, we can derive the confidence interval for $R^2_m$ by directly applying this transformation function to the confidence interval of $H_m$. We remark that the pseudo-$R^2$s introduced by
\cite{Hu2006} for unimodal logistic regression can be treated as special cases of our proposed
pseudo-$R^2$ metric.

\section{Simulation}
\label{sec:Num}

We conducted simulation studies to assess the empirical performance of the proposed metric on the evaluation of
information gain of modalities. In all simulations, we chose $n=300$ and $p=600$. We generated i.i.d samples of
$\{\X_i \}_{i=1}^n$ from $N_p(\zero,\Sigmab)$, where $\Sigmab=0.8\I_p+0.2\J_p$, and $\J_p\in \mathcal{R}^{p\times p}$ is
matrix with all entries equal to 1. We consider the following three models.

Model 1 (Linear regression): We set $M=3$, $p_1=p_2=p_3=200$,
  $\betabs_1=(-1.15, 1, 1.75,\allowbreak \zero)^T\in \mathcal{R}^{200}$,
  $\betabs_2=(-0.6, \allowbreak -0.8, 0.45, \zero)^T\in \mathcal{R}^{200}$,
  $\betabs_3=(-0.75, 0.8, \allowbreak -0.75, \zero)^T\in \mathcal{R}^{200}$, and
  $\betabs=\delta\cdot({\betab_1^{\ast}}^T,{\betabs_2}^T,\allowbreak {\betabs_3}^T)^T\in \mathcal{R}^{600}$, where
  $\delta\in \{0.6, 0.8, 1, \allowbreak 1.2, 1.6, 2 \}$ controls the magnitude of $\betabs$. We generated the response vector from the
  linear model that $\Y=\X\betabs+\epsilonb$, where $\epsilonb=(\epsilon_1,\ldots,\epsilon_n)^T$ is a vector of i.i.d errors
  from $N(0,1)$.
  
Model 2 (Logistic regression): We set $M=2$, $p_1=p_2=300$,
  $\betabs_1=(0.5, -1, -1.6, \allowbreak 0.9,\zero)^T\in \mathcal{R}^{300}$,
  $\betabs_2=(0.4, 0.8, -0.7, -1.4,\zero)^T\in \mathcal{R}^{300}$, and
  $\betabs=\delta\cdot({\betab_1^{\ast}}^T,{\betabs_2}^T)^T\in \mathcal{R}^{600}$, where
  $\delta\in \{1,1.2,1.4,1.7,2, 2.3 \}$ controls the magnitude of $\betabs$. We generated the response vector $\Y$ from the
  logistic regression model that $P(Y_i=1|\X_i)=\exp(\X_i^T\betabs)/\{1+\exp(\X_i^T\betabs) \}$.

Model 3 (Probit regression): We set $M=2$, $p_1=p_2=300$, and
  $\betabs_1=(0.5, 0.6, -0.7, \allowbreak -0.9,\zero)^T\in \mathcal{R}^{300}$,
  $\betabs_2=(0.4, 0.5, -0.6, -0.7,\zero)^T\in \mathcal{R}^{300}$.  We let
  $\betabs=\delta\cdot({\betab_1^{\ast}}^T,{\betabs_2}^T)^T\in \mathcal{R}^{600}$, where
  $\delta \in \{1, 1.1, 1.2, 1.3, 1.4, 1.5 \}$ controls the magnitude of $\betabs$. We generated the response vector $\Y$
  from the Probit regression model that $P(Y_i=1|\X_i)=\Phi(\X_i^T\betabs)$, where $\Phi(\cdot)$ is the cumulative
  distribution function of standard normal distribution.

  We involved three methods into comparison. The oracle method computed the maximum likelihood estimators $\betabh$ and
  $\betabh_0$ by regressing the response variable on the set of variables with truly nonzero coefficients, with and
  without the $m$-th modality, respectively. Our proposed method, denoted as SIS+SCAD, first performed SIS, followed by
  solving (\ref{eq:9}) and (\ref{eq:11}) with a SCAD penalty. The screening threshold for SIS was chosen via a grid
  search procedure as described in Section \ref{sec:estim-expect-relat}. The tuning parameters $\lambda_1$ and
  $\lambda_2$ were chosen to minimize the BIC criteria of the full and reduced models.  The third method, denoted as
  SIS+refit, first performed SIS, and then solved (\ref{eq:9}) and (\ref{eq:11}) with $\lambda_1=\lambda_2=0$. In other
  words, this approach refitted the model using only those variables that survived the SIS screening procedure. For each
  method, we plugged $\betabh$ and $\betabh_0$ into (\ref{eq:Hh}) to obtain the corresponding $\hat{H}_m$. Then, we used
  the procedure described after Theorem \ref{thm:2} to construct the two-sided confidence intervals. In particular, for
  the oracle method, we used quantiles of the non-central $\chi^2$-distribution with the oracle $s_m$ degrees of
  freedom. For the other two methods, we used quantiles of the non-central $\chi^2$-distribution with $\tilde{s}_m$
  degrees of freedom. To calculate $H_m$, we used (\ref{eq:example1}) for Model 1. For Models 2 and 3, we used the
  following procedure to calculate $H_m$. First, we independently generated another 10000 samples from these two
  models. We regressed the responses on variables outside the $m$-th modality with truly nonzero coefficients and
  obtained the maximum likelihood estimator as an approximation of $\betab_0$. Then, we plugged it and $\betabs$ into
  (\ref{eq:3}) and used the sample average over these 10000 samples as an approximation of $H_m$. For each model, we
  conducted 5000 simulation replications to evaluate the empirical coverage probabilities the two-sided confidence
  intervals for $H_m$. We also represented the sensitivity and specificity of
  variable selection performance given by SIS+SCAD and SIS+refit methods. 
  The results are represented in Figures \ref{f:lm} -- \ref{f:prob} for linear, logistic and probit regression models,
  respectively. 

Each one of the Figures \ref{f:lm} -- \ref{f:prob} shows that our method's confidence intervals exhibit satisfactory coverage probability when $H_m$ is not small.
In general, when $H_m$ increases, the coverage probability of our method also increases, which is expected. Comparing
coverage probabilities between the two modalities in each model shows that Modality 1 consistently achieves higher
coverage than Modality 2. This superior performance can be attributed to Modality 1's larger $H_m$ value, which further
validates our previous findings. 
We observe that when $H_m$ is relatively large, the coverage probability of our method
is close to that of the oracle method. Considering that our method deals with high-dimensional modalities, whereas the
oracle method uses only the low-dimensional true method, this result is promising. Furthermore, our method achieves
higher coverage probability compared to the SIS+refit approach, particularly when $H_m$ is small. This demonstrates that
the second-stage partially penalized method effectively eliminates false positives, leading to improved coverage
probability. This improvement is
evident in the specificity panels of Figures \ref{f:lm}, \ref{f:log}, and \ref{f:prob}, where our method consistently
achieves higher specificity than the SIS+refit method. Moreover, our method does not require a
perfect variable selection in order to yield good coverage probabilities. For example, in the Logistic and Probit
regressions, the sensitivity panels of Figures \ref{f:log} and \ref{f:prob} reveal that our method occasionally
omits some important variables. Nevertheless, it can still reach the nominal 95\% coverage probability when $H_m$ is
relatively large. This observation aligns with our discussion of the theoretical results following Theorem \ref{thm:2}.


\begin{figure}
  \centering
  \includegraphics[width=0.8\textwidth]{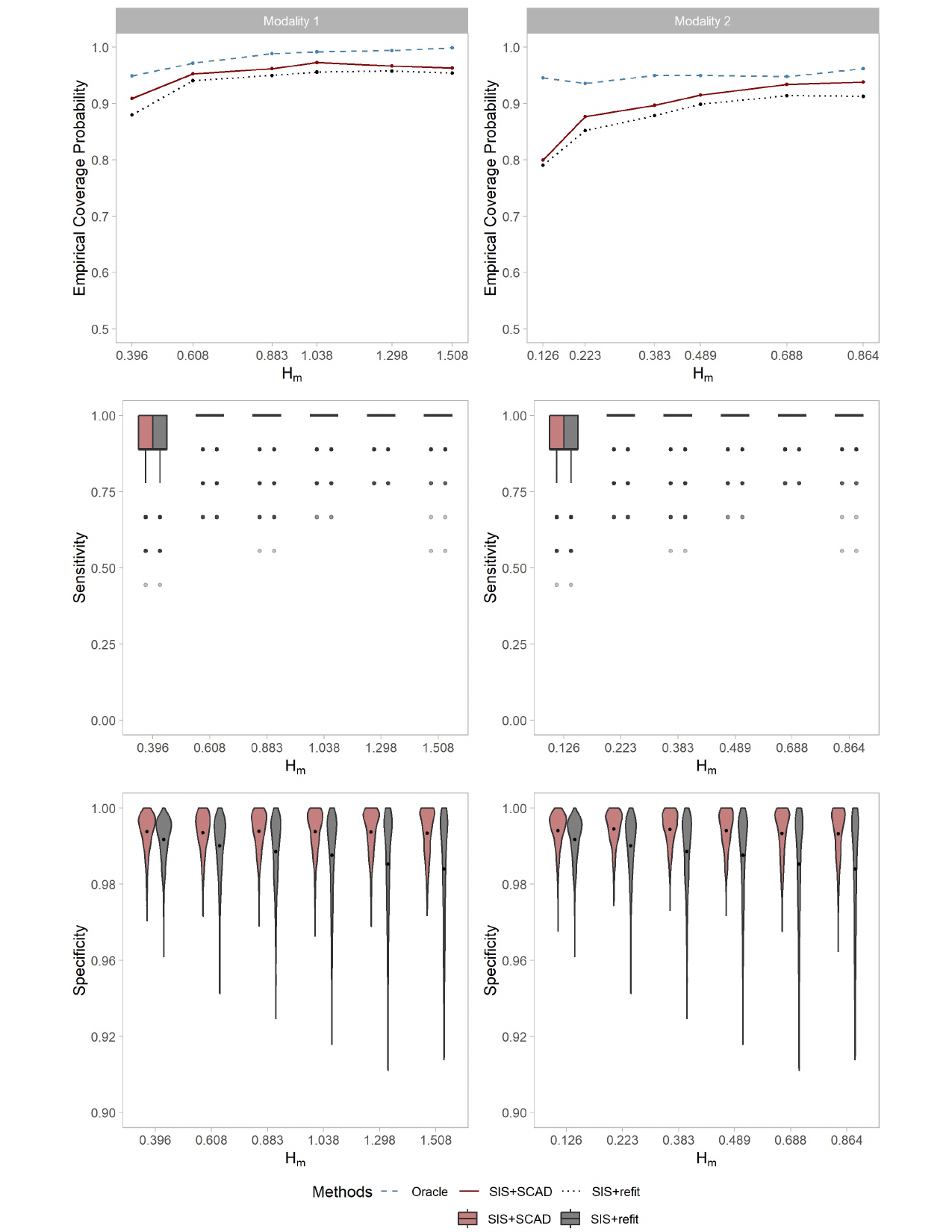}
  \caption{ {Empirical coverage probabilities and variable selection performance (Sensitivity and Specificity) by
      the three methods for Model 1. The oracle method (blue), SIS + SCAD penalized method (red), SIS + refit method
      (gray).}  }
  \label{f:lm}
\end{figure}

\begin{figure}
  \centering
  \includegraphics[width=0.8\textwidth]{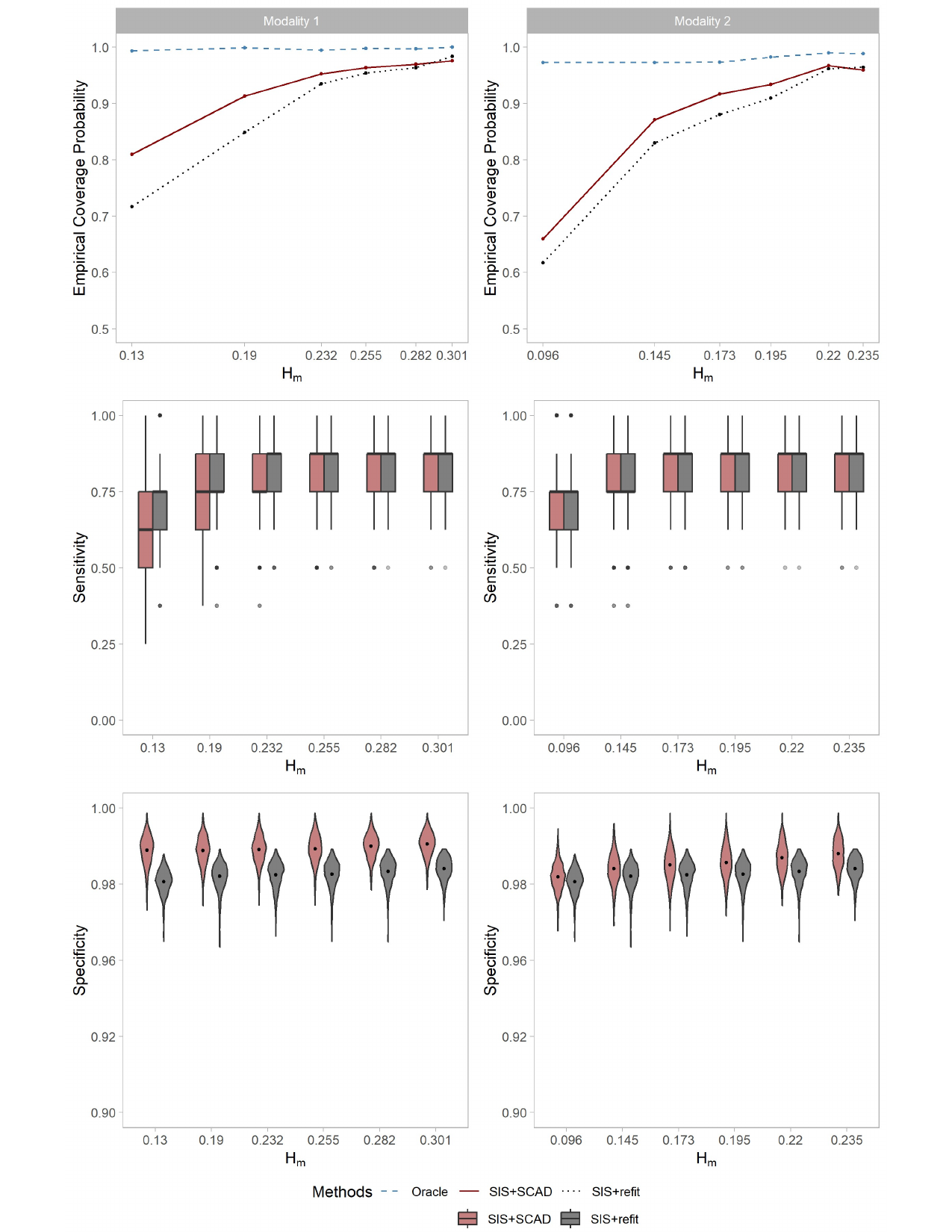}
  \caption{Empirical coverage probabilities and variable selection performance (Sensitivity and Specificity) by the three methods for Model 2. The oracle method (blue), SIS + SCAD penalized method (red), SIS + refit method (gray).}
  \label{f:log}
\end{figure}

\begin{figure}
  \centering
  \includegraphics[width=0.8\textwidth]{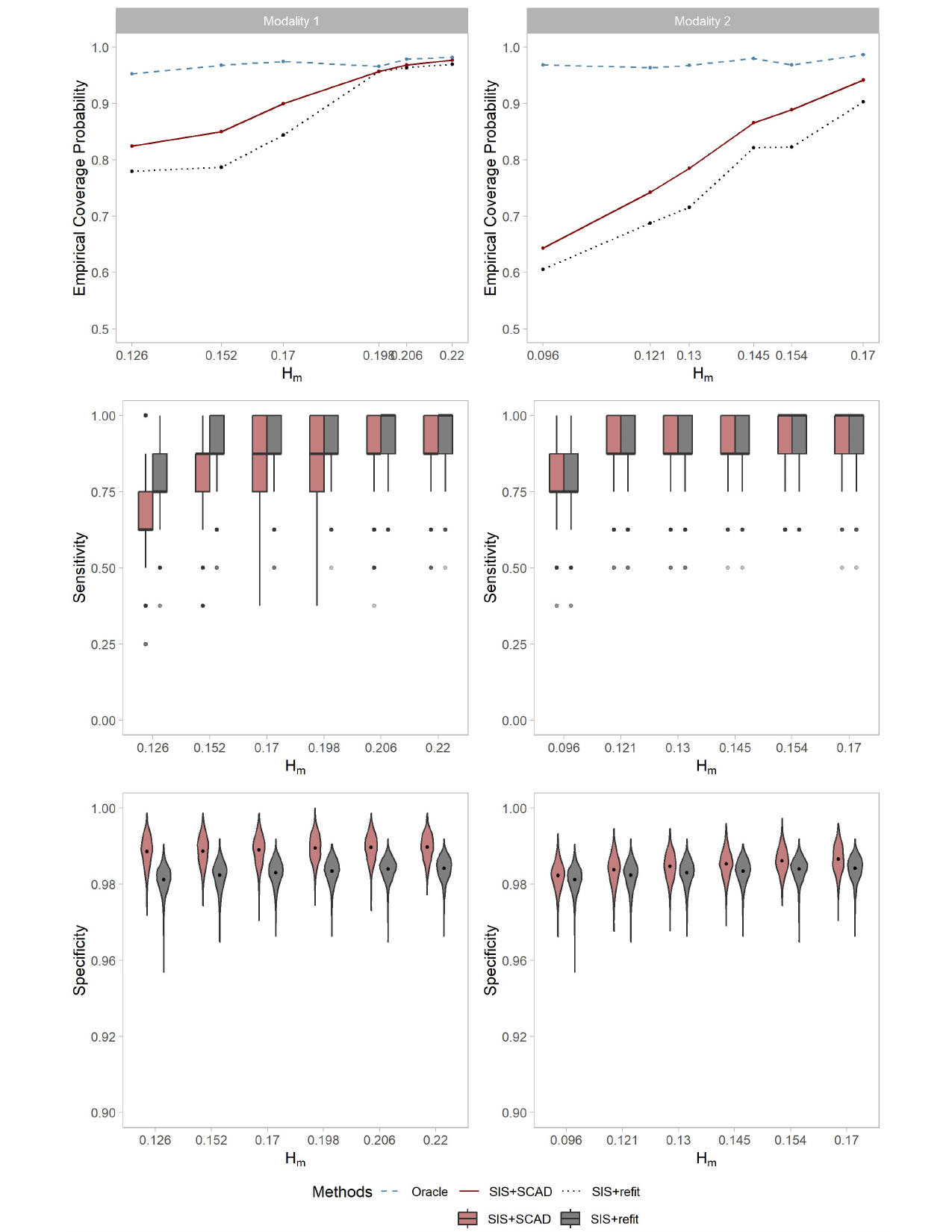}
  \caption{Empirical coverage probabilities and variable selection performance (Sensitivity and Specificity)  by the three methods for Model 3. The oracle method (blue), SIS + SCAD penalized method (red), SIS + refit method (gray).}
  \label{f:prob}
\end{figure}

\section{\label{sec:Num-neuro} Analysis of multimodal neuroimaging data}
We applied our proposed method to analyze a multimodal neuroimaging dataset from the Alzheimer's Disease Neuroimaging
Initiative (ADNI) database. 
Alzheimer's disease (AD) is an irreversible neurodegenerative disorder characterized by progressive impairment of
cognitive and memory functions. It is a major form of dementia for elderly subjects and is one of the leading causes of
death in the United States. Neuroimaging techniques, such as positron emission tomography (PET), provide valuable tools
for the diagnosis of AD. Our study included $n=669$ subjects with Amyloid-PET and FDG-PET scans from the ADNI
database. The Amyloid-PET image detects the presence and distribution of amyloid-beta proteins in the brain, which is a
crucial biomarker of AD. The FDG-PET image uses fluorodeoxyglucose (FDG) to measure metabolic activity and glucose
uptake in the brain, which provides insights into brain function and identifies areas of abnormal activity in
neurological disorders like AD. We used Destrieux atlas \citep{destrieux2010automatic} to parcellate each brain into 160
regions of interest (ROIs), consisting of 148 cortical regions and 12 sub-cortical regions $(p_1=p_2=160)$. Three
measurements were used to assess the cognitive decline of AD patients: the memory score (MEM), the executive function
score (EF), and the AD diagnostic label (DX). MEM and EF are continuous scores, while DX is an ordinal variable with
levels ranging from normal (NC) to early mild cognitive impairment (EMCI), stable MCI (SMCI), late MCI (LMCI), and AD.

In our analysis, we regressed MEM and EF on the two imaging modalities using the linear regression. Additionally, we
dichotomized DX, categorizing individuals with LMCI and AD as the case group and the remaining participants as the control
group. We then regressed the dichotomized DX variable on the two imaging modalities using the logistic regression. For each
model, we first performed SIS followed by solving the partially penalized problem with the SCAD penalty in (\ref{eq:9}) to
select ROIs that associated with the outcome. The selected ROIs are shown in Figure \ref{f:ANDI}. The results indicate that
the MEM score is more sensitive to the accumulation of AD-related biomarkers than the EF score, as more ROIs are selected in
MEM-based model than EF-based model. This is aligned with current clinical findings that memory decline is a typical
clinical symptom \citep{jahn2013memory}. Moreover, we used our proposed method to calculate the estimated ERE with 95\% confidence intervals and
corresponding p-values for each of the two modalities and regression models. The results, shown in Table \ref{tab:1},
suggest that FDG-PET is significantly more informative than Amyloid-PET for the outcomes of EF and DX. This shows that the
metabolism level is a more putative biomarker for monitoring cognitive decline than pathology biomarkers such as
amyloid. 

\begin{figure}
  \centering
  \includegraphics[width=5in]{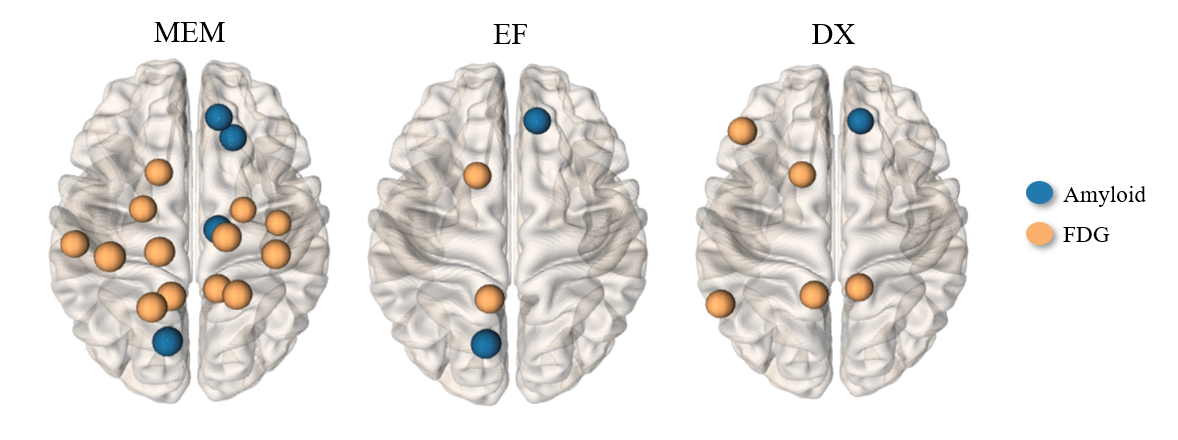}
  \caption{Selected regions of interest for MEM, EF and DX by two modalities. 
  MEM, the memory score; EF, the executive function score; DX, the AD diagnostic label.}
  \label{f:ANDI}
\end{figure}

\begin{table}
\centering
\caption{Point estimates, confidence intervals and p-values of the Expected Relative Entropy for different Neuroimaging techniques. (MEM: the memory score; EF: the executive function score; DX: the AD diagnostic label.)}
{\begin{tabular}{@{\extracolsep{4pt}}l*{2}{l@{\enspace}l@{\enspace}l}}
  \hline\hline
 {}  & \multicolumn{3}{c}{Amyloid-PET} & \multicolumn{3}{c}{fluorodeoxyglucose-PET}\\
 \cmidrule{2-4}
 \cmidrule(lr){5-7}
 Outcome & $\hat{H}_m$ & 95\% CI & p-value & $\hat{H}_m$ & 95\% CI & p-value\\
   \midrule
   MEM & 0.170 & (0.109, 0.233) & $1.19\times 10^{-23}$ & 0.147 & (0.078, 0.191) & $3.93 \times 10^{-15}$ \\
  EF  & 0.089 & (0.048, 0.139) & $1.10 \times 10^{-13}$ & 0.230 & (0.161, 0.306) & $4.28 \times 10^{-34}$\\
  DX & 0.033 & (0.011, 0.066) & $2.77 \times 10^{-6}$ & 0.168 & (0.106, 0.229) & $1.5\times 10^{-22}$\\
 \hline\hline
\end{tabular}}
  \label{tab:1}
\end{table}

\section{ Discussion}
\label{sec:discuss}

We propose an ERE metric to quantify the information gain of individual modalities in high-dimensional multimodal
generalized linear models. The metric is justified through some discussion of its mathematical properties and is related to
conventional goodness-of-fit metrics for unimodal models, as well as recent developments in modality evaluation under more
restrictive models. We demonstrate that these metrics can be viewed as special cases or are equivalent to our proposed
metric. Moreover, we develop a deviance-based estimator of the ERE metric by using a two-step procedure. We derive the error
bound and the asymptotic distribution of this estimator, which enables us to perform statistical inference on the ERE. This
meets the growing demand for rigorous statistical inference tools at the modality level in multimodal models.  Through
theoretical studies and numerical experiments. We demonstrate that our constructed confidence intervals maintain good
coverage probabilities without relying on variable selection consistency, a desirable property given the difficulty of
achieving perfect variable selection in high-dimensional models.

While our study focuses on generalized linear models, we note that our entropy-based metric can possibly be extended to
other models, such as the Cox proportional hazard model. Further research into such extensions would be valuable, as it
would provide a comprehensive tool set for assessing the significance of individual modalities across various multimodal
models.

\section{Acknowledgment}
Dr. Li's research is partially supported by NIH grant R01-AG073259. 

\bibliographystyle{biom}
\bibliography{reference}

\newpage

\begin{center}
  \textbf{\Large Supplements Materials}
\end{center}

\setcounter{equation}{0}
\setcounter{section}{0}
\renewcommand{\thesection}{S\arabic{section}}

\renewcommand{\theequation}{S\arabic{equation}}
\renewcommand{\thesubsection}{\thesection.\arabic{subsection}}
\renewcommand{\thefigure}{S\arabic{figure}}
\renewcommand{\thetable}{S\arabic{table}}
\renewcommand{\thepro}{S\arabic{pro}}

\section{Proofs}
\subsection{Proof of Proposition~\ref{pro1}:}
  Under the assumption that $\E|\log {p(y|\x_{-m})}/{p(y|\x_{-(m,l)})}| < \infty$, we have
  \begin{align*}
    (H_{(m,l)} - H_m)/2
    &= \E_{\x} [D_{\text{KL}}(p(y|\x)\parallel p(y|\x_{-(m,l)})) - D_{\text{KL}}(p(y|\x)\parallel p(y|\x_{-m}))] \\
    &= \int_{\x} p(\x)\int_{y}p(y|\x)\log \frac{p(y|\x_{-m})}{p(y|\x_{-(m,l)})}dy d\x \\
    &= \int_{\x_{-m}} p(\x_{-m}) \left[ \int_{y} \log \frac{p(y|\x_{-m})}{p(y|\x_{-(m,l)})} \int_{\x_{m}}p(y|\x)
      p(\x_m|\x_{-m}) d\x_m dy \right] d\x_{-m} \\ 
    &= \int_{\x_{-m}} p(\x_{-m}) \left[ \int_{y} p(y|\x_{-m}) \log \frac{p(y|\x_{-m})}{p(y|\x_{-(m,l)})} dy \right]  \\
    &= \int_{\x_{-m}} p(\x_{-m}) D_{\text{KL}}[p(y|\x_{-m})\parallel p(y|\x_{(-(m,l))})] d\x_{-m}
    \geq 0,
  \end{align*}
  where $p(\x)$ and $p(\x_{-m})$ denote the densities of $\x$ and $\x_{-m}$ respectively, and the last inequality follows
  from the non-negativity of the KL divergence.

\subsection{Proof of equation~\eqref{eq:example1}:} For the reduced model without the $m$-th modality,
  $y|\x_{-m} \sim N(\x_{-m}^T \betab_{-m}^*, \sigma^2_{\epsilon} + \sigma^2_{m|-m})$.  By Lemma 1 of
  \cite{belov2011distributions}, the KL divergence can be expressed as
  \begin{equation*}
    D_{KL}(p(y|\x) \| p(y|\x_{-m})) =  \frac{1}{2} \left\{
      \log\frac{\sigma_{m|-m}^2 + \sigma^2_{\epsilon} }{\sigma^2_{\epsilon}} +
      \frac{\sigma^2_{\epsilon}}{\sigma_{m|-m}^2 + \sigma^2_{\epsilon}} + 
      \frac{(\x^T\betabs - \x_{-m}^T\betab_{-m}^*)^2 }{\sigma_{m|-m}^2 + \sigma^2_{\epsilon}} -1 \right\},
  \end{equation*}
  where
  $\sigma_{m|-m}^2 = \betab_m^{*T} \Sigmab_{m|-m} \betab_m^* = \betab_m^{*T} (\Sigmab_m - \Sigmab_{m,-m}
  \Sigmab_{-m}^{-1}\Sigmab_{-m,m})\betab_m^*$, $\Sigmab_m = \E(\x_m \x_m^T)$,
  $\Sigmab_{-m} = \E(\x_{-m} \x_{-m}^T)$, $\Sigmab_{-m,m}= \E(\x_{-m} \x_m^T)$,
  $\Sigmab_{m,-m}= \E(\x_m \x_{-m}^T)$, and
  $\Sigmab_{m|-m}=\Sigmab_m - \Sigmab_{m,-m} \Sigmab_{-m}^{-1}\Sigmab_{-m,m}$. Then,
  \begin{align*}
    H_m = 2\E_{\x} [D_{KL}(p(y|\x) \| p(y|\x_{-m}))] 
        =  \log\frac{\sigma_{m|-m}^2 + \sigma^2_{\epsilon} }{\sigma^2_{\epsilon}} +
          \frac{ \betab_m^{*T} \Sigmab_m \betab_m^* - \sigma_{m|-m}^2}{\sigma_{m|-m}^2 + \sigma^2_{\epsilon}}.
  \end{align*}


\subsection{Proof of Theorem 1:} We have
  \begin{align*}
    \frac{1}{2} \hat{H}_m &= \frac{1}{n} \log L_n(\betabh) - \frac{1}{n} \log L_n(\betabh_0) \\
                        &= \frac{1}{n}\{\log L_n(\betabh) - \log L_n(\betabs) \} - 
                          \frac{1}{n}\{\log L_n(\betabh_0) - \log L_n(\betab_0 ) \}  +  \frac{1}{n}\{\log L_n(\betabs) - \log L_n(\betab_0 ) \}.
  \end{align*}
  First, we show that $n^{-1}\{\log L_n(\betabh) - \log L_n(\betabs) \} = O_P(sn^{-1} \log n)$ and
  $n^{-1} \allowbreak \{\log L_n(\betabh_0) - \log L_n(\betab_0 ) \} = O_P(sn^{-1} \log n) $. 
   Let
  \begin{align*}
    \S_n(\betab) &= -\frac{\partial \{n^{-1}\log L_n(\betab) \}}{\partial \betab} 
                   = -\frac{1}{n} \sum_{i=1}^{n}\{Y_i - b'(\X_i^T\betab) \}\X_i , \\
    \J_n(\betab) &= \frac{\partial \S_n(\betab)}{\partial \betab} 
                   = \frac{1}{n} \sum_{i=1}^{n}b''(\X_i^T\betab)\X_i \X_i^T .
  \end{align*}
  By the second-order Taylor's expansion,
\begin{equation}
\label{eq:Taylor}
    \frac{1}{n}\log L_n(\betabh) - \frac{1}{n}\log L_n(\betabs) 
    = \underbrace{-\S_n(\betabh)^T(\betabh-\betabs)}_{I}  
     +\underbrace{\frac{1}{2}(\betabh-\betabs)^T \J_n(\tilde{\betab})(\betabh-\betabs)}_{II},
\end{equation}
  where $\Tilde{\betab} = t \betabs + (1-t) \betabh$ for some $0 \leq t \leq 1$ is a vector between $\betabs$ and $\betabh$.

  For $I$, $\S_n(\betabh)^T(\betabh-\betabs) = \sum_{j=1}^{p} S_n(\betabh)_j (\betah_j-\beta^*_j) $.
  For $j \notin \calMt$, $\betah_j = 0$. Then, by the sure screening property, $\betah_j-\beta^*_j = 0$.
  For $j \in \calMt$, by the KKT condition of \eqref{eq:9}, 
  $$ 0 \in \frac{\partial Q_n(\betab)}{\partial \beta_j}\biggr|_{\beta_j = \hat{\beta}_j} = -S_n(\betabh)_j + I(j \notin \calMt_m) p'_{\lambda}( | \hat{\beta_j} |) ,$$
  where $Q_n(\betab) = -{n}^{-1}\log L_n(\betab) + p_{\lambda_1}(\betab_{\calMt_{-m}})$, $S_n(\betabh)_j$ denotes the $j$-th
  element of $\S_n(\betabh)$, and $p'_{\lambda_1}(\cdot)$ denotes the first derivative of $p_{\lambda_1} (\cdot)$. By the
  property of folded concave penalty function, we have $ |S_n(\betabh)_j| = |I(j \notin \calMt_m)p'_{\lambda_1}(\hat{\beta_j})| \leq \lambda_1 $.
  Therefore,
  \begin{equation}
    \label{eq:10}
    | \S_n(\betabh)^T(\betabh-\betab) | \leq 
    \sup_{j \in \calMt} | S_n(\betabh)_j | \cdot \| \betabh_{\calMt} - \betab_{\calMt}^* \|_1 \leq
    \lambda_1 \| \betabh_{\calMt} - \betab_{\calMt}^* \|_1.
  \end{equation}

  For II, we have
  \begin{align*}
    (\betabh-\betabs)^T \J_n(\tilde{\betab})(\betabh-\betabs) 
     = (\betabh_{\calMt} - \betab_{\calMt}^*)^T \J_n(\tilde{\betab})_{\calMt} (\betabh_{\calMt} - \betab_{\calMt}^*) ,
\end{align*}
where $\J_n(\tilde{\betab})_{\calMt} = {n}^{-1} \sum_{i=1}^{n}b''(\X_i^T\tilde{\betab}) \X_{i,\calMt} \X_{i,\calMt}^T $.
Under Conditions~\ref{cond2} and \ref{cond4}--\ref{cond8}, the conditions $A'$ -- $E$ for Theorems 4 and 5 in \cite{Fan2010}
are satisfied. Then, these two theorems imply that 
\begin{align*}
    P(\mathcal{M}_* \subset \calMt) &\geq 1 - s \{ \exp(-k_1 n^{1-2\kappa-q}) + n k_2\exp(-n^{2q})\}, \\
    P(|\calMt| = O( n^{2\kappa})) &\geq 1 - s \{ \exp(-k_1 n^{1-2\kappa-q}) + n k_2\exp(-n^{2q})\},
\end{align*}
where $k_1, k_2$ are some positive constants. Under Condition~\ref{cond7}, $1 - 2\kappa - 2q > 0 $. Thus, we have
$P(\mathcal{M}_* \subset \calMt) \rightarrow 1$ and $\tilde{s}=O_P(n^{2\kappa})=o_P(n)$. 
Then, it follows from Lemma \ref{lemma3} that
\begin{equation}
  \label{eq:12}
    \lambda_{\max} (\J_n(\tilde{\betab})_{\calMt} ) = O_P(1).
\end{equation}
Combining (\ref{eq:Taylor}), (\ref{eq:10}) and (\ref{eq:12}), we have
\begin{equation}
  \label{eq:13}
    \left| \frac{1}{n}\log L_n(\betabh) - \frac{1}{n}\log L_n(\betabs) \right| 
    \leq  \lambda_1 \| \betabh_{\calMt} - \betab_{\calMt}^* \|_1 + 
  C_3 \| \betabh_{\calMt} - \betab_{\calMt}^* \|_2^2,
\end{equation}
where $C_3$ is a generic positive constant. Under Condition \ref{cond1}, \cite{negahban2012unified} shows that the
negative log-likelihood function of the generalized linear model satisfies the Restricted Strong Convexity (RSC)
condition. 
Then, applying Theorem 1 and Corollary 1 of \cite{negahban2012unified}, we have,
\begin{equation}
  \label{Negahban}
  \begin{split}
      \| \betabh_{\calMt} - \betab_{\calMt}^* \|_1  = O_P \left( s \sqrt{\frac{ \log \tilde{s} }{n}} \right) 
      \text{ and } \| \betabh_{\calMt} - \betab_{\calMt}^* \|_2  = O_P \left( \sqrt{\frac{ s \log \tilde{s}}{n} } \right).
  \end{split}
\end{equation}
Combining (\ref{eq:13}) and (\ref{Negahban}) with $\lambda_1 = c_1 \sqrt{(\log \tilde{s})/{n}}$  and $\tilde{s} = O_P(n^{2\kappa})$, we have
\begin{equation}
    \label{eq:lm1.1}
    \frac{1}{n}\log L_n(\betabh) - \frac{1}{n}\log L_n(\betabs) = O_P \left(\frac{s \log \tilde{s}}{n} \right) = O_P \left(\frac{s \log n}{n} \right).
\end{equation}
Similarly, for $\betabh_0$ and $\betab_0$, we have
\begin{equation}
    \label{eq:lm1.2}
    \frac{1}{n}\log L_n(\betabh_0) - \frac{1}{n}\log L_n(\betab_0) = O_P \left(\frac{s \log \tilde{s}}{n} \right) =
    O_P \left(\frac{s \log n}{n} \right). 
\end{equation}

Next, we show that $2 n^{-1}\{\log L_n(\betabs) - \log L_n(\betab_0 ) \} = H_m + O_P({1}/{\sqrt{n}})$.
Since
\begin{equation}
    \label{eq:lm3.1}
    \log L_n(\betabs) - \log L_n(\betab_0 ) 
    = \sum_{i=1}^{n} \log p(Y_i|\X_{i}) - \sum_{i=1}^{n} \log p(Y_i|\X_{i,-m}),
\end{equation}
and
\begin{equation}
    \label{eq:lm3.2}
    H_m = 2 \textrm{E}_{\x} \Bigl\{ \E_{y|\x} \left[ \log \frac{p(y|\x)}{p(y|\x_{-m})} \right] \Bigr\}
    = 2 \textrm{E} \left[\log \frac{p(y|\x)}{p(y|\x_{-m})} \right],
\end{equation}
by Chebyshev's inequality, 
\begin{align*}
    P \left( \left| \frac{1}{n}\sum_{i=1}^{n} \log \frac{p(Y_i|\X_{i})}{p(Y_i|\X_{i,-m})} - \E \left\{\log \frac{p(y|\x)}{p(y|\x_{-m})} \right\} \right| > \epsilon \right) 
     \leq  \frac{\mathrm{Var}\left(\log \frac{p(y|\x)}{p(y|\x_{-m})}\right)}{n\epsilon^2}.
\end{align*}
Let $\Deltab = \betabs - \betab_0$. We have,
\begin{align*}
    \mathrm{Var} \left(\log \frac{p(y|\x)}{p(y|\x_{-m})} \biggr| \x \right) &= \mathrm{Var}(y\x^T\betabs - y\x^T\betab_0 - b(\x^T\betabs) + b(\x^T\betab_0) | \x ) \\
    & = \mathrm{Var}(y \x^T \Deltab | \x) = (\x^T\Deltab)^2 b''(\x^T \betabs).
\end{align*}
By Taylor's expansion of $b(\x^T\betab_0)$ around $\x^T\betabs$, we have 
\begin{equation} \label{eq: Taylor_bfun}
    b(\x^T\betab_0) = b(\x^T\betabs) - b'(\x^T\betabs) \x^T \Deltab + \frac{1}{2}
    b''(\x^T\tilde{\betab})(\x^T \Deltab)^2, 
\end{equation}
where $\tilde{\betab} = t \betab + (1-t) \betab_0$ for some $t\in (0,1)$. Then,
\begin{align*}
  \E \left[ \log \frac{p(y|\x)}{p(y|\x_{-m})} \biggr| \x \right]
  &= b'(\x^T\betabs) \x^T \betabs- b'(\x^T\betabs) \x^T \betab_0  -b(\x^T\betabs) 
   + b(\x^T\betab_0) \\
  &= \frac{1}{2} b''(\x^T\tilde{\betab}) (\x^T \Deltab)^2 ,\\
  \mathrm{Var}\{b''(\x^T\tilde{\betab}) (\x^T \Deltab)^2\}
  &\leq \E\{b''(\x^T\tilde{\betab})^2 (\x^T \Deltab)^4 \} 
   \leq (\E[b''(\x^T\tilde{\betab})^4])^{1/2} (\E[(\x^T \Deltab)^8])^{1/2}.
\end{align*}
Then, we have
\begin{equation}
  \label{eq:14}
  \begin{split}
  & \hspace{3ex}\mathrm{Var} \left(\log \frac{p(y|\x)}{p(y|\x_{-m})} \right) 
  = \E\left[\mathrm{Var}\left\{\log \frac{p(y|\x)}{p(y|\x_{-m})} \biggr| \x\right \}\right] +
    \mathrm{Var}\left[\E\left\{\log \frac{p(y|\x)}{p(y|\x_{-m})} \biggr| \x \right\} \right] \\ 
  & \leq \E\{(\x^T \Deltab)^2 b''(\x^T \betabs ) \}
    + \frac{1}{4} \E\{b''(\x^T\tilde{\betab})^2 (\x^T \Deltab)^4 \} \\
  & \leq (\E[b''(\x^T \betabs)^2])^{1/2} (\E[(\x^T \Deltab)^4])^{1/2} 
    +
    \frac{1}{4}(\E[b''(\x^T\tilde{\betab})^4])^{1/2} (\E[(\x^T \Deltab)^8])^{1/2} . 
  \end{split}
\end{equation}
By Condition \ref{cond1}, $\x^T \Deltab/\ltwonorm{\Deltab}$ is sub-Gaussian. 
Then, for $k =4 $ and $8$, $\E[(\x^T \Deltab)^k]< \infty$.
Under Conditions \ref{cond6} and \ref{cond9}, (\ref{eq:14}) implies that there exists a finite constant $C$ such
that $\mathrm{Var} \left(\log \frac{p(y|\x)}{p(y|\x_{-m})} \right) < C$. Taking $\epsilon = {D}/{\sqrt{n}}$ for some
sufficiently large constant $D$, we have
\begin{align*}
  P \left( \left |\frac{1}{n}\sum_{i=1}^{n} \log \frac{p(Y_i|\X_i)}{p(Y_i|\X_{i,-m})} - \E \left[ \log
  \frac{p(y|\x)}{p(y|\x_{-m})} \right] \right| > \frac{D}{\sqrt{n}} \right)  \leq \frac{C}{D}.
\end{align*}
This together with \eqref{eq:lm3.1} and \eqref{eq:lm3.2} imply that
\begin{equation}
    \label{eq:lm1.3}
    \frac{2}{n}\log L_n(\betabs) - \frac{2}{n}\log L_n(\betab_0 ) 
    = H_m + O_P({1}/{\sqrt{n}}).
\end{equation}
Combining \eqref{eq:lm1.1}, \eqref{eq:lm1.2} and \eqref{eq:lm1.3}, we have
\begin{equation}
    \hat{H}_m - H_m=  
    O_P\left(\max \left\{ \frac{s \log n}{n} ,\frac{1}{\sqrt{n}} \right\}\right).
\end{equation}

\subsection{Proof of Theorem 2}
  For $\hat{H}_m$, By Taylor's expansion, we have 
\begin{equation}
    \label{eq:Taylor_Hm}
 \begin{split}
   &\hspace{3ex} \frac{1}{2} \hat{H}_m = \frac{1}{n}\log L_n(\betabh) - \frac{1}{n}\log L_n(\betabh_0) \\
     &= \underbrace{\frac{1}{n}(\betabh - \betabh_0)^T \X^T \{\Y-\boldsymbol{b}'(\X \betabh) \}}_{I}
     + \underbrace{\frac{1}{2} (\betabh - \betabh_0)^T \J_n(\betabh) (\betabh - \betabh_0)}_{II} 
     + \underbrace{\frac{1}{6}(\betabh- \betabh_0)^T \bR_n }_{III},
\end{split}
\end{equation}
where $\R_n = (R_{n,1}, \dots, R_{n,p})^T$ with
$R_{n,j} = n^{-1}(\betabh - \betabh_0)^T\bX ^T \diag \{ |\X_j| \circ b'''(\bX \Tilde{\betab})\} \bX (\betabh - \betabh_0)$, and
$\Tilde{\betab} = t\betabh_0 + (1-t)\betabh$ for some $t\in (0,1)$.

First, we can show that  $I =o_P(1)$. Since $\mathcal{M}\subset \tilde{\mathcal{M}}$, 
\begin{equation}
    \label{eq:Thm2.1}
    \frac{1}{n}(\betabh - \betabh_0)^T \X \{\Y-\boldsymbol{b}'(\X\betabh) \} 
    = -(\betabh - \betabh_0)^T \S_n(\betabh) 
    =  -\sum_{j \in \calMt}
    (\hat{\beta}_j - \hat{\beta}_{0_j}) S_n(\betabh)_j. 
\end{equation}

We first derive the upper bound for $\lonenorm{\betabh - \betabh_0}$.   
It follows from Lemma \ref{lemma5} that  
\begin{equation}\label{eq:diff_bh}
    \begin{split}
        \sqrt{n} (\betabh_{\calMt}  - \betabh_{0_{\calMt}}) 
        &= \frac{1}{\sqrt{n}}  \J_{n_{\calMt}}^{-1/2} \P_m \J_{n_{\calMt}}^{-1/2} \X_{\calMt} ^T \{\Y-\boldsymbol{b}'(\X \betabs) \} 
        + \sqrt{n} \J_{n_{\calMt}}^{-1/2} \P_m \J_{n_{\calMt}}^{1/2}\Q_m\betabs_{\calMt_m} + o_P(1), 
        \end{split}
\end{equation} 
where $\J_{n_{\calMt}}=\J_n(\betabs)_{\calMt}$, $\Q_m=(\zero, \I_{\calMt_m})^T \in \mathcal{R}^{\tilde{s} \times \tilde{s}_m}$ and
$\I_{\calMt_m}$ is the $\tilde{s}_m$-dim identity matrix. 
From Lemma~\ref{lemma3}, $\lambda_{\max}(\J_{n_{\calMt}}) = O_P(1)$. Since
\begin{equation*}
    \E \ltwonorm{\frac{1}{\sqrt{n}} \P_m \J_{n_{\calMt}}^{-1/2} \X_{\calMt} ^T \{\Y-\boldsymbol{b}'(\X \betabs) \}}^2 = \text{tr}(\P_m) = \text{rank}(\P_m) = \tilde{s}_m,
\end{equation*}
it follows from Markov's inequality that
\begin{equation}\label{eq:diff_bh_1}
    \begin{split}
      & \hspace{3ex} \ltwonorm{ \frac{1}{\sqrt{n}}  \J_{n_{\calMt}}^{-1/2} \P_m \J_{n_{\calMt}}^{-1/2} \X_{\calMt} ^T \{\Y-\boldsymbol{b}'(\X \betabs) \} } \\
      &\leq \lambda_{\max}(\J_{n_{\calMt}}^{-1/2}) \ltwonorm{ \frac{1}{\sqrt{n}}\P_m \J_{n_{\calMt}}^{-1/2} \X_{\calMt} ^T
        \{\Y-\boldsymbol{b}'(\X \betabs) \} } = O_P(\sqrt{\tilde{s}_m}).
    \end{split}
\end{equation}
By Condition~\ref{cond11}, $\ltwonorm{\betabs_{\calMt_m}} = \ltwonorm{\betabs_{\calM_m}}= O(n^{\frac{2\kappa-1}{2}})$, then,
\begin{equation}\label{eq:diff_bh_2}
    \ltwonorm{\sqrt{n} \J_{n_{\calMt}}^{-1/2} \P_m \J_{n_{\calMt}}^{1/2}\Q_m\betabs_{\calMt_m}}
    \leq \sqrt{n} \ltwonorm{\J_{n_{\calMt}}^{-1/2} \P_m \J_{n_{\calMt}}^{1/2}\Q_m} \ltwonorm{\betabs_{\calMt_m}} = O_P(n^{\kappa}).
\end{equation}
Notice that $\betabh_{\calMt^{\perp}} = \betabh_{0_{\calMt^{\perp}}} = \zero$ and $\tilde{s}_m \leq \tilde{s} = O_P(n^{2\kappa})$. 
This together with \eqref{eq:diff_bh}, \eqref{eq:diff_bh_1} and \eqref{eq:diff_bh_2} imply that
\begin{equation}
  \label{eq:diff_est} 
  n \| \betabh - \betabh_{0} \|_2^2
  = n \| \betabh_{\calMt} - \betabh_{0_{\calMt}} \|_2^2 
    =O_P(\tilde{s}_m) + O_P(n^{2\kappa}) + o_P(\max\{\tilde{s}_m, n^{2\kappa}\})
    = O_P(n^{2\kappa}).
\end{equation}

By H\"older's inequality,
\begin{equation} \label{eq:diff_est2}
    \|\betabh - \betabh_0\|_1 = \|\betabh_{\calMt} - \betabh_{0_{\calMt}}\|_1 \leq \sqrt{\tilde{s}} \| \betabh_{\calMt} - \betabh_{0_{\calMt}}\|_2 = O_P(n^{2\kappa-\frac{1}{2}})  .
\end{equation}
By the KKT condition, for any $j \in \calMt $, $0 \in -S_n(\betabh)_j + I(j\notin \calMt_m) p'_{\lambda_1}(|\hat{\beta}_j|)$. Since
$|p'_{\lambda_1}(|\hat{\beta}_j|)| \leq \lambda_1$ by the property of folded concave penalty, it follows that
$|S_n(\betabh)_j | = |p'_{\lambda_1}(\hat{\beta}_j)| \leq \lambda_1=O(\sqrt{\log \tilde{s}/n})$. 
Under Condition \ref{cond13}, $\tilde{s} = O(n^{2\kappa}) = o(n^{\frac{1}{3}})$. Then, \eqref{eq:diff_est2} implies that,
\begin{equation}
\label{eq:eq2a}
\begin{split}
  &\hspace{3ex}|(\betabh - \betabh_0)^T \X \{\Y-\boldsymbol{b}'(\X\betabh) \}|
  \leq  \| \betabh_0 - \betabh \|_1 \|  \S_n(\betabh)\|_{\infty}\\
  &= O_P(n^{2\kappa-\frac{1}{2}}) O_P(\sqrt{{(\log \tilde{s})}/{n}}) 
   = O_P(n^{2\kappa-1}\sqrt{\log n}) =  o_P(1).    
\end{split}
\end{equation}

Next, let $\J_{n_{\calMt}}=\J_n(\betabs)_{\calMt}$ and 
\begin{equation}
\label{eq:15}
T_1  =(\betabh_{\calMt} - \betabh_{0_{\calMt}})^T \J_{n_{\calMt}}(\betabh_{\calMt} - \betabh_{0_{\calMt}}).
\end{equation}
We show that $II = T_1 + o_P(1)$. Since for any $j \in \calMt^{\perp}$, $\betah_j = 0 $ and $\betah_{0_j} = 0$, we have
\begin{align*}
    (\betabh - \betabh_0)^T \J_n(\betabh) (\betabh - \betabh_0) = 
    (\betabh_{\calMt} - \betabh_{0_{\calMt}})^T \J_n(\betabh)_{\calMt} (\betabh_{\calMt} - \betabh_{0_{\calMt}}),
\end{align*}
where $\J_n(\hat{\betab})_{\calMt} = {n}^{-1} \sum_{i=1}^{n}b''(\X_i^T\hat{\betab}) \X_{i,\calMt} \X_{i,\calMt}^T $. 
By Taylor's expansion of $\J_n(\betabh)_{\calMt}$ around $\betabs$, we have 
\begin{align*}
    (\betabh_{\calMt} - \betabh_{0_{\calMt}})^T \J_n(\betabh)_{\calMt} (\betabh_{\calMt} - \betabh_{0_{\calMt}})
    &= (\betabh_{\calMt} - \betabh_{0_{\calMt}})^T \J_{n_{\calMt}}(\betabh_{\calMt} - \betabh_{0_{\calMt}}) + 
     \R_a^T {(\betabh_{\calMt}-\betabs_{\calMt})} ,
\end{align*}
where $\R_a$ is a $p$-dim vector whose $j$-th element
$$ R_j = \frac{1}{2n} (\betabh_{\calMt} - \betabh_{0_{\calMt}})^T \bX_{\calMt}^T \diag \{ |X_j| \circ b'''(\bX \Tilde{\betab})\}
\X_{\calMt} (\betabh_{\calMt} - \betabh_{0_{\calMt}}),$$ and $\Tilde{\betab} = t\betabh + (1-t)\betabs$ for some $t\in (0,1)$.
Then,
\begin{equation}
\label{eq:16}
\begin{split}
    &\hspace{3ex}|(\betabh_{\calMt} - \betabh_{0_{\calMt}})^T 
    (\J_n(\betabh)_{\calMt} - \J_{n_{\calMt}}) 
    (\betabh_{\calMt} - \betabh_{0_{\calMt}})| 
    =  |\bR_a^T (\betabh_{\calMt}-\betab_{\calMt}^*)| \\
    & \leq  \frac{1}{2n}\| \betabh_{\calMt} - \betab_{\calMt}^* \|_1  \| \betabh_{\calMt} - \betabh_{0_{\calMt}} \|_2^2 
      \cdot
     \max_{j \in \calMt} \lambda_{\max}(\X_{\calMt} ^T\diag \{ |X_j| \circ b'''(\bX \Tilde{\betab})\} \X_{\calMt}).
\end{split}
\end{equation}
Under Condition~\ref{cond12},
$ \max_{j \in \calMt} \lambda_{\max}(\X_{\calMt}^T\diag \{ |X_j| \circ b'''(\X \Tilde{\betab})\} \X_{\calMt})=
O_P(n)$. Then, it follows from \eqref{Negahban}, \eqref{eq:diff_est} and \eqref{eq:16} that
\begin{equation}
\label{eq:eq2b}
    | (\betabh - \betabh_0)^T \{\J_n(\betabh) - \J_n(\betabs) \} (\betabh - \betabh_0)| 
       = O_P\left( s\sqrt{\frac{\log \tilde{s}}{n}}\right) O_P(n^{2\kappa-1}) O_P(1)
    = o_P(1).
\end{equation}
For $III$, we have
\begin{equation}
  \label{eq:eq2c}
  \begin{split}
    &\hspace{3ex}|(\betabh- \betabh_0)^T \R_n|  \leq 
    \| \betabh_{\calMt}- \betabh_{0_{\calMt}} \|_1 \| \R_n \|_{\infty} \\
    & \leq \frac{1}{n} \| \betabh_{\calMt}-\betabh_{0_{\calMt}} \|_1 \| \betabh_{\calMt} - \betabh_{0_{\calMt}} \|_2^2 \cdot \sup_{j\in \calMt}
      \lambda_{\max} \{ \X_{\calMt} ^T \diag \{ |X_j| \circ b'''(\X \Tilde{\betab})\} \X_{\calMt} \}\\ 
   &= O_P( s\sqrt{{\log \tilde{s}}/{n}}) O_P(n^{2\kappa-1}) O_P(1)
   =  o_P(1).
  \end{split}
\end{equation}
Recalling the definition of $T_{1}$ from (\ref{eq:15}), 
it follows from \eqref{eq:Taylor_Hm}, \eqref{eq:eq2a}, \eqref{eq:eq2b} and \eqref{eq:eq2c} that
\begin{equation}
    \hat{H}_m = \frac{2}{n}\{ \log L_n(\betabh) - \log L_n(\betabh_0) \}= T_1 + o_P(1).
\end{equation}
Next, we derive the asymptotic distribution of $T_n = nT_1$. Note that,  
$
    T_n  = \| \J_{n_{\calMt}}^{1/2} \sqrt{n} (\betabh_{\calMt}  - \betabh_{0_{\calMt}})\|_2^2.
$
Then, from \eqref{eq:diff_bh} we have
\begin{equation*}
  \begin{split}
  T_n &= \| \P_m [ n^{-1/2} \J_{n_{\calMt}}^{-1/2} \X_{\calMt}^T \{\Y-\boldsymbol{b}'(\X \betabs) \} 
        + \sqrt{n} \J_{n_{\calMt}}^{1/2}\Q_m\betabs_{\calMt_m} ]   + o_P(1)\|_2^2 \\
      &= \| n^{-1/2} \K_{mm}^{-1/2} \Q_m^T \J_{n_{\calMt}}^{-1} \X_{\calMt}^T \{\Y-\boldsymbol{b}'(\X \betabs) \} 
        + \sqrt{n} (\K_{mm}^{-1/2} -\Omegab_{mm}^{-1/2} )\betabs_{\calMt_m}  \\
      &\hspace{3ex} + \sqrt{n} \Omegab_{mm}^{-1/2} \betabs_{\calMt_m} + o_P(1)\|_2^2,
    \end{split}
\end{equation*}
where $\K_{mm} = \Q_m^T \J_{n_{\calMt}}^{-1} \Q_m$, $\Omegab_{mm}=\Q_m^T \J_{0_{\calMt}}^{-1} \Q_m$, and $\J_{0_{\calMt}} =
      \E[b''(\x^T\betabs)\x_{\calMt}\x_{\calMt}^T]$.  Let  $\epsilonb_0 =  \sqrt{n} (\K_{mm}^{-1/2} -\Omegab_{mm}^{-1/2}
      )\betabs_{\calMt_m}$. Recall that $\ltwonorm{ \betab_{\calMt_m}^* } = O(n^{\kappa-\frac{1}{2}})$ and
      $\lambda_{\max}(\J_{n_{\calMt}}) = O_P(1)$, we have
\begin{equation*}
  \begin{split}
   \|\epsilonb_0 \|_2^2 &\leq n\|\betabs_{\calMt_m}\|_2^2 \|\K_{mm}^{-1/2} -\Omegab_{mm}^{-1/2}\|_2^2 
    \leq n\|\betabs_{\calMt_m}\|_2^2 \|\K_{mm}^{-1} -\Omegab_{mm}^{-1}\|_2 \\
    &\leq n\|\betabs_{\calMt_m}\|_2^2 \|\K_{mm}^{-1}\|_2 \|\K_{mm} -\Omegab_{mm}\|_2 \|\Omegab_{mm}^{-1}\|_2\\
    &\lesssim n \|\betabs_{\calMt_m}\|_2^2 \|\Q_m^T( \J_{n_{\calMt}}^{-1} - \J_{0_{\calMt}}^{-1} )\Q_m \|_2  \lesssim n \|\betabs_{\calMt_m}\|_2^2 \|\J_{n_{\calMt_m}} - \J_{0_{\calMt_m}} \|_2
    =o_P\left(1 \right), 
    \end{split}
\end{equation*}
since by the matrix Bernstein inequality,  
\begin{equation*}
  \begin{split}
    &\hspace{3ex}\|\J_{n_{\calMt_m}} - \J_{0_{\calMt_m}} \|_2 
    \leq \sqrt{\tilde{s}_m} \|\J_{n_{\calMt_m}} - \J_{0_{\calMt_m}} \|_1     \\
    &= \sqrt{\tilde{s}_m} \|n^{-1}b''(\x_i^T\betabs)\x_{i,\calMt_m}\x_{i,\calMt_m}^T - \E[b''(\x^T\betabs)\x_{\calMt_m}\x_{\calMt_m}^T] \|_1 
    = O_P(\sqrt{n^{-1}{\tilde{s}_m \log \tilde{s}_m}}).
    \end{split}
\end{equation*}
Here, we use $a_n \lesssim b_n$ to denote that there exists a constant $C$ such that $a_n\leq Cb_n$. Hence, 
\begin{equation*}
    T_n = \| n^{-1/2} \K_{mm}^{-1/2} \Q_m^T \J_{n_{\calMt}}^{-1} \X_{\calMt}^T \{\Y-\boldsymbol{b}'(\X \betabs) \} 
    + \sqrt{n} \Omegab_{mm}^{-1/2} \betabs_{\calMt_m} \|_2^2 + o_P(1).
\end{equation*}
Let $\deltab_i = n^{-1/2}\P_m \J_{n_{\calMt}}^{-1/2} \{Y_i - b'(\X_i^T \betabs) \}\X_i $, we have 
\begin{equation*}
  \begin{split}
    \W & = n^{-1/2} \P_m \J_{n_{\calMt}}^{-1/2} \X_{\calMt}^T \{\Y-\boldsymbol{b}'(\X \betabs) \}
    = \sum_{i=1}^n n^{-1/2} \P_m \J_{n_{\calMt}}^{-1/2} \X_i \{Y_i - b'(\X_i^T \betabs) \} 
    = \sum_{i=1}^n \deltab_i. 
    \end{split}
\end{equation*}
Since $\E(y|\x) = b'(\x^T \betabs)$, $\E[\deltab_i] = \zero$. Moreover,
\begin{equation*}
  \begin{split}
    \sum_{i=1}^n \mathrm{Cov}(\deltab_i) &= \sum_{i=1}^n\E(\deltab_i\deltab_i^T)
    = \E\left[ n^{-1} \sum_{i=1}^n  \P_m \J_{n_{\calMt}}^{-1/2} \X_i \{Y_i - b'(\X_i^T
      \betabs) \}^2 \X_i^T \J_{n_{\calMt}}^{-1/2} \P_m \right] \\
    &= \E \left\{  \P_m \J_{n_{\calMt}}^{-1/2} \left[ n^{-1}  \sum_{i=1}^n\X_{i} \E\{Y_i - b'(\X_i^T \betabs) |\X_i \}^2 \X_i^T \right]  \J_{n_{\calMt}}^{-1/2}\P_m  \right\} = \E (\P_m )= \I_{\tilde{s}_m},
    \end{split}
\end{equation*}
\vspace{-1ex} 
\begin{equation*}
  \begin{split}
    \Tilde{s}_m^{1/4} \sum_{i=1}^n \E\| \deltab_i \|_2^3 
       &= n^{-3/2} \Tilde{s}_m^{1/4} \sum_{i=1}^n \E[\| \P_m \J_{n_{\calMt}}^{-1/2} \X_i \|_2^3 \E\{Y_i -
          b'(\X_i^T\betabs)|\X_i\}^3 ] \\ 
        &\lesssim n^{-3/2} \Tilde{s}_m^{1/4} \sum_{i=1}^n \E( \| \P_m \|_2^3 \|
          \J_{n_{\calMt}}^{-1/2} \X_i \|_2^3) 
        \lesssim n^{-3/2} \Tilde{s}_m^{1/4} \sum_{i=1}^n \E(\X_i^T \J_{n_{\calMt}} ^{-1} \X_i)^{3/2} 
        = o(1).
    \end{split}
\end{equation*}
      Then, by the Lemma S6 of \cite{shi2019linear}, we have 
\begin{equation}
    \sup_{\mathcal{C} } |P(\W \in \mathcal{C}) - P(\Z \in \mathcal{C})| \rightarrow 0,
\end{equation}
      where the supremum is taken over all convex set $\mathcal{C} \subset \mathcal{R}^{\tilde{s}_m}$ and $\Z  \sim N_{\tilde{s}_m} (\zero,  \allowbreak \I_{\tilde{s}_m})$.
      Let $T_0 = \| \W + \sqrt{n} \Omegab_{mm}^{-1/2}\betabs_{\calMt_m} \|^2 $ and $C_x = \{ \Z \in
      \mathcal{R}^{\tilde{s}_m} : \| \Z + \sqrt{n} \Omegab_{mm}^{-1/2}\betabs_{\calMt_m} \|_2^2 \leq x \}$. We have $P(\W
      \in C_x ) = P(T_0 \leq x). $ 
      Then,  
\begin{equation*}
    \sup_{x>0} |P(T_0 \leq x) - P(\chi_{\tilde{s}_m,\Gamma_n}^2 \leq x)| \rightarrow 0,
\end{equation*}
      where $\Gamma_n = n \betab_{\calMt_m}^{*T} \Omegab_{mm}^{-1} \betabs_{\calMt_m}$. 
      Since $ \hat{H}_m = T_1
      + o_P(1) = n^{-1}T_0 + o_P(1)$, for any arbitrary $\epsilon>0$, we have  
\begin{equation*}
  \begin{split}
    & \hspace{3ex}P(n^{-1} \chi_{\tilde{s}_m,\Gamma_n}^2 \leq \frac{x}{n} - \epsilon) + o(1) 
     \leq  P(n^{-1} T_0 + \epsilon \leq \frac{x}{n}) + o(1) \\
    & \leq  P(\hat{H}_m \leq \frac{x}{n}) \leq P( n^{-1}T_0-\epsilon \leq \frac{x}{n} )+ o(1) 
     \leq  P(n^{-1} \chi_{\tilde{s}_m,\Gamma_n}^2 \leq \frac{x}{n} + \epsilon), 
    \end{split}
\end{equation*}
      which implies that
$
P(\chi_{\tilde{s}_m,\Gamma_n}^2 \leq x - n\epsilon ) + o(1) \leq P(n \hat{H}_m \leq x)
\leq P( \chi_{\tilde{s}_m,\Gamma_n}^2 \leq x + n\epsilon).
$
Letting $n \epsilon \rightarrow 0$, we have
\begin{equation}
    \sup_{x>0} |P(  n\hat{H}_m\leq x) - P(\chi_{\tilde{s}_m,\Gamma_n}^2 \leq x)| \rightarrow 0.
  \end{equation}

\bigskip
\subsection{Proof of Proposition~\ref{pro3}}
(1) Letting $\Deltab=\betabs - \betab_0$, we have
\begin{equation*}
    H_m = 2 \E[\Deltab^T\x y - b(\x^T\betabs) + b(\x^T\betab_0)].
\end{equation*}
Since $\E[\Deltab^T\x^T\{y-b'(\x^T\betabs) \}]  =  \E[\Deltab^T\x^T \allowbreak 
  \E\{y-b'(\x^T\betabs)|\x \}] = \zero$, it follows from \eqref{eq: Taylor_bfun} that 
\begin{equation}
    \label{eq10_Taylor}
    H_m = 2 \E[\Deltab^T \x^T\{y-b'(\x^T\betabs) \}] + \Deltab^T \J_0(\tilde{\betab}) \Deltab=\Deltab_{\calM}^T \J_0(\tilde{\betab})_{\calM}
    \Deltab_{\calM}, 
\end{equation}
where $\tilde{\betab} = t\betabs+(1-t)\betab_0$ for some $t\in (0,1)$ and $\J_0(\betab) = \E\{ b'' (\x^T \betab)\x\x^T
\}$. Note that $\betabs_{\calM^{\perp}} = \betab_{0_{\calM^{\perp}}} = \zero$, 
$\Deltab_{\calM}=(\Deltab_{\calM_{-m}}^T, \betab_{\calM_{m}}^{*T})^T$, and 
$\|\Deltab_{\calM_{-m}}\|_2^2 = O(\tilde{s}_m) = O(n^{2\kappa})$. Under Condition~\ref{cond11},
$\|\betabs_{\calM_{m}}\|_2^2 = O(n^{2\kappa - 1})$. Under Condition~\ref{cond12},
$\lambda_{\max}(\J_0(\tilde{\betab})_{\calM}) = O(1)$.
Then, we have
\begin{equation} \label{eq:crossterm}
    \begin{split}
   &\hspace{3ex}\betab_{\calM_{m}}^{*T} \J_0(\tilde{\betab})_{\calM_{m}\calM_{-m}} \Deltab_{\calM_{-m}}
    \leq  \E\{(\betab_{\calM_{m}}^{*T} \x_{\calM_{m}}\x_{\calM_{-m}}^T \Deltab_{\calM_{-m}})^2 \}^{1/2}
    \E\{ b''(\x^T\tilde{\betab})^2 \}^{1/2} \\
    &\leq  \E[\betab_{\calM_{m}}^{*T} \x_{\calM_{m}}\x_{\calM_{m}}^T \betabs_{\calM_{m}}
    \Deltab_{\calM_{-m}}^{T} \x_{\calM_{-m}}\x_{\calM_{-m}}^T\Deltab_{\calM_{-m}}]^{1/2} 
    \cdot\E[ b''(\x^T\tilde{\betab})^2]^{1/2} \\
    & =  \E[\|\x_{\calM_{m}}^T \betabs_{\calM_{m}}\|_2^2 \|\x_{\calM_{-m}}^T\Deltab_{\calM_{-m}}\|_2^2]^{1/2} \E[
      b''(\x^T\tilde{\betab})^2]^{1/2} \\ 
    & =  O(n^{2\kappa-1}) O(n^{2\kappa})  O(1)  = o(1). 
    \end{split}
\end{equation}
Since $\Deltab_{\calM_{-m}}^T \J_0(\tilde{\betab})_{\calM_{-m}}\Deltab_{\calM_{-m}} \geq 0$, it follows from
\eqref{eq10_Taylor} and \eqref{eq:crossterm} that
\begin{equation}
    \label{eq:quadratic_ineq}
      \begin{split}
        H_m 
        &= \betab_{\calM_{m}}^{*T} \J_0(\tilde{\betab})_{\calM_{m}}\betabs_{\calM_{m}} + 
        \Deltab_{\calM_{-m}}^{*T} \J_0(\tilde{\betab})_{\calM_{-m}}\Deltab_{\calM_{-m}} + 
        2 \Deltab_{\calM_{-m}}^{*T} \J_0(\tilde{\betab})_{\calM_{-m,m}}\betabs_{\calM_{m}} \\
        &\geq \betab_{\calM_{m}}^{*T} \J_0(\tilde{\betab})_{\calM_{m}}\betabs_{\calM_{m}} + 
        2 \Deltab_{\calM_{-m}}^{*T} \J_0(\tilde{\betab})_{\calM_{-m,m}}\betabs_{\calM_{m}}\\
        &= \betab_{\calM_{m}}^{*T} \J_0(\tilde{\betab})_{\calM_{m}}\betabs_{\calM_{m}} + o(1).
      \end{split}
\end{equation}
Moreover, we have  
\begin{equation}
    \label{eq:submatrix_bound}
    \begin{split}
    \betab_{\calM_{m}}^{*T} \J_0(\tilde{\betab})_{\calM_{m}}\betabs_{\calM_{m}} 
    &\leq  \|\betabs_{\calM_{m}}\|_2^2\cdot \lambda_{\max}(\J_0(\tilde{\betab})_{\calM_{m}}) 
    = O(n^{2\kappa-1}),\\
    \betab_{\calM_{m}}^{*T} (\J_0(\tilde{\betab})_{\calM_{m}}- \J_0(\betabs)_{\calM_{m}} )\betabs_{\calM_{m}}
     &=  \E[|\x_{\calM_{m}}^T\betabs_{\calM_{m}}|^2 \{b''(\x^T\tilde{\betab}) - b''(\x^T\betabs) \}] \\
     &\leq \E[|\x_{\calM_{m}}^T\betabs_{\calM_{m}}|^4]^{1/2} \E[\{b''(\x^T\tilde{\betab}) - b''(\x^T\betabs) \}^2]^{1/2} \\
     &= O(n^{2\kappa-1}) o(1) = o(n^{2\kappa - 1}),
    \end{split}
\end{equation}
where in the second-to-last equation, we utilize Conditions \ref{cond1}, \ref{cond6} and \ref{cond11}, which imply that 
\begin{equation*}
    \begin{split}    
    \E[|\x_{\calM_{m}}^T\betabs_{\calM_{m}}|^4] &=O\left(n^{4\kappa-2}\right),\\
      \E[\{b''(\x^T\tilde{\betab}) - b''(\x^T\betabs) \}^2] & \lesssim \E[\|\x^T (\tilde{\betab} - \betabs)\|^2] =O\left(\frac{s \log n}{n}\right) = o(1).
    \end{split}
\end{equation*}
 Note that $\beta_j^*=0$ for any $j \notin \calM$, thus it follows from \eqref{eq:quadratic_ineq} and
 \eqref{eq:submatrix_bound} that
\begin{equation}
\label{eq:Hm_qual}
  n H_m  \geq n \betab_{\calM_{m}}^{*T} \J_0(\betabs)_{\calM_{m}}\betabs_{\calM_{m}}
        + o(n)
      = n \betab_{\calMt_{m}}^{*T} \J_0(\betabs)_{\calMt_{m}}\betabs_{\calMt_{m}}
        + o(n)
      = \Gamma_n + o(n).
\end{equation}
(2) Under the additional assumption that
$n \|\Deltab_{\calM_{-m}}\|_2^2=n \|\betabs_{-m}-\betab_{0_{-m}}\|_2^2 = o(n^{2\kappa})$, we have,
\begin{equation*}
    \Deltab_{\calM_{-m}}^T \J_0(\tilde{\betab})_{\calM_{-m}}\Deltab_{\calM_{-m}} 
    \leq  \|\Deltab_{\calM_{-m}}\|_2^2 \cdot\lambda_{\max} (\J_0(\tilde{\betab})_{\calM_{-m}}) 
      = o\left(n^{2\kappa-1}\right).
\end{equation*}
Then, 
\begin{equation}
  \label{eq: Taylor_Hm}
  \begin{split}
      H_m &=  \betab_{\calM_{m}}^{*T} \J_0(\tilde{\betab})_{\calM_{m}}\betabs_{\calM_{m}} +
      \Deltab_{\calM_{-m}}^T \J_0(\tilde{\betab})_{\calM_{-m}}\Deltab_{\calM_{-m}}
      + 2 \betab_{\calM_{m}}^{*T} \J_0(\tilde{\betab})_{\calM_{m}\calM_{-m}}
       \Deltab_{\calM_{-m}} \\
    &= \betab_{\calM_{m}}^{*T} \J_0(\betabs)_{\calM_{m}}\betabs_{\calM_{m}} +o\left(n^{2\kappa-1}\right) + o\left(n^{2\kappa-1}\right) \\
    & = \betab_{\calM_{m}}^{*T} \J_0(\betabs)_{\calM_{m}}\betabs_{\calM_{m}} + o(1)
    =  \betab_{\calMt_{m}}^{*T} \J_0(\betabs)_{\calMt_{m}}\betabs_{\calMt_{m}} + o(1).
  \end{split}
\end{equation}

\section{Supporting Lemmas and their proofs}

\begin{lem} \label{lemma3}
Under Conditions~\ref{cond1} -- \ref{cond9},  
$\lambda_{\max}(\J_n(\tilde{\betab})_{\calMt}) = O_P(1)$ for any $\tilde{\betab} \in \mathcal{N}_0$, where $\mathcal{N}_0 = \{ \betab \in \mathcal{R}^p : \| \betab - \betabs\|_2\ \leq C \sqrt{n^{-1} s \log n}, \betab_{\mathcal{\Tilde{M}}^{\perp}}=\zero  \}$ for some $C >0$.
\end{lem}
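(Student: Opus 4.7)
The plan is to decompose $\J_n(\tilde{\betab})_{\calMt} = \J_n(\betabs)_{\calMt} + \{\J_n(\tilde{\betab})_{\calMt} - \J_n(\betabs)_{\calMt}\}$ and bound each piece on the sure‑screening event $\Omega_n = \{\calM \subset \calMt\} \cap \{\tilde s = O(n^{2\kappa})\}$, which has probability tending to one under Conditions~\ref{cond1}--\ref{cond8} by the \cite{Fan2010} results already invoked before Theorem~\ref{thm:1}. Working on $\Omega_n$, both $\tilde{\betab}$ and $\betabs$ are supported in $\calMt$, so $\X_i^T(\tilde{\betab}-\betabs) = \X_{i,\calMt}^T(\tilde{\betab}_{\calMt} - \betabs_{\calMt})$ and all spectral computations reduce to the $\tilde s$-dimensional submatrix.

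For the baseline term, I would first control the population Hessian: for any unit $v \in \Rcal^{\tilde s}$, Cauchy--Schwarz gives
\begin{equation*}
v^T\E[b''(\x^T\betabs)\x_{\calMt}\x_{\calMt}^T]v \;\leq\; \sqrt{\E[b''(\x^T\betabs)^2]}\cdot\sqrt{\E[(v^T\x_{\calMt})^4]},
\end{equation*}
which is $O(1)$ uniformly in $v$ and in the index set $\calMt$, since the first factor is finite by Condition~\ref{cond9} and the second is $O(1)$ by the sub‑Gaussian assumption in Condition~\ref{cond1}. Hence $\lambda_{\max}(\E[b''(\x^T\betabs)\x_{\calMt}\x_{\calMt}^T]) = O(1)$. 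A matrix Bernstein inequality applied after standard exponential truncation of the (potentially heavy‑tailed) summands $b''(\X_i^T\betabs)\X_{i,\calMt}\X_{i,\calMt}^T$, whose tails are controlled by Conditions~\ref{cond1} and \ref{cond2}, then yields $\|\J_n(\betabs)_{\calMt} - \E[\J_n(\betabs)_{\calMt}]\|_2 = o_P(1)$ because $\tilde s = O_P(n^{2\kappa}) = o_P(n)$. Combined, $\lambda_{\max}(\J_n(\betabs)_{\calMt}) = O_P(1)$.

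For the perturbation I would use the spectral bound
\begin{equation*}
\|\J_n(\tilde{\betab})_{\calMt} - \J_n(\betabs)_{\calMt}\|_2 \;\leq\; \max_{1\leq i\leq n}|b''(\X_i^T\tilde{\betab}) - b''(\X_i^T\betabs)|\cdot \lambda_{\max}(n^{-1}\X_{\calMt}^T\X_{\calMt}).
\end{equation*}
By Condition~\ref{cond6} the first factor is at most $C\max_i|\X_i^T(\tilde{\betab}-\betabs)| \leq C\max_i\|\X_{i,\calMt}\|_2\cdot\|\tilde{\betab}-\betabs\|_2$, which is $o_P(1)$ by the sub‑Gaussian maximal inequality together with $\|\tilde{\betab}-\betabs\|_2 = O(\sqrt{s\log n / n})$ and $\tilde s = O_P(n^{2\kappa})$; the second factor is $O_P(1)$ by standard sub‑Gaussian sample‑covariance concentration on a submatrix of dimension $\tilde s = o(n)$. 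Thus the perturbation is $o_P(1)$, and combining with the baseline bound proves the lemma.

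The main obstacle is the baseline concentration step, because $b''$ can be unbounded (as in Poisson regression), so the summands of $\J_n(\betabs)_{\calMt}$ have heavy tails that must be tamed via truncation coupled to the exponential moment bound in Condition~\ref{cond2}. Once this is handled, the perturbation step is a routine combination of the Lipschitz bound in Condition~\ref{cond6} with sub‑Gaussian maximal inequalities, and the restriction to $\Omega_n$ removes the randomness of $\calMt$ at negligible probabilistic cost.
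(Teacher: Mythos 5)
Your route is genuinely different from the paper's, but it has a gap at the perturbation step. The paper never decomposes $\J_n(\tilde{\betab})_{\calMt}$ around $\betabs$ and never needs a weighted matrix concentration bound: it factors $\lambda_{\max}(\J_n(\tilde{\betab})_{\calMt})$ into $\sup_{\betab\in\mathcal{N}_0} b''(\X_i^T\betab)$ times $\lambda_{\max}(n^{-1}\sum_i \X_{i,\calMt}\X_{i,\calMt}^T)$, controls the scalar weight via the Lipschitz Condition~\ref{cond6} together with a Chebyshev bound on $\X_i^T(\betab-\betabs)$ (whose second moment is $\lambda_{\max}(\Sigmab)\|\betab-\betabs\|_2^2$, i.e.\ of order $n^{-1}s\log n$) plus $|b''(\X_i^T\betabs)|=O_P(1)$ from Condition~\ref{cond9}, and bounds the unweighted Gram matrix by showing $\mathbf{u}_i^T(\X_{i,\calMt}\X_{i,\calMt}^T-\Sigmab_{\calMt})\mathbf{u}_i$ is sub-exponential and applying a scalar Bernstein inequality. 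In particular the paper pays no factor of $\|\X_{i,\calMt}\|_2$ when controlling the weights, which is exactly the factor that sinks your argument.

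Concretely: you reduce the perturbation to $\max_i\|\X_{i,\calMt}\|_2\cdot\|\tilde{\betab}-\betabs\|_2=o_P(1)$. Under Condition~\ref{cond1}, $\E\|\X_{i,\calMt}\|_2^2=\mathrm{tr}(\Sigmab_{\calMt})\geq c\,\tilde{s}$, so $\max_{i\leq n}\|\X_{i,\calMt}\|_2$ is at least of order $\sqrt{\tilde{s}}\asymp n^{\kappa}$; with the radius $C\sqrt{n^{-1}s\log n}$ of $\mathcal{N}_0$ and only $s\leq\tilde{s}=O_P(n^{2\kappa})$ available, the product is of order $\sqrt{\tilde{s}\,s\log n/n}$, which can be as large as $n^{2\kappa-1/2}\sqrt{\log n}$. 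That is not $o_P(1)$ (nor even $O_P(1)$) once $\kappa\geq 1/4$, yet the lemma is stated under Conditions~\ref{cond1}--\ref{cond9}, where $\kappa$ is only restricted to $(0,(1-a)/2)$, and it is invoked in the proof of Theorem~\ref{thm:1} before the stronger Condition~\ref{cond13} ($\kappa<1/6$) is in force. So your Cauchy--Schwarz step needs either an added restriction of the form $\tilde{s}\,s\log n=o(n)$ or a different way of controlling the change in the weights $b''$ that does not multiply by $\|\X_{i,\calMt}\|_2$. A secondary soft spot is the baseline step: you invoke Condition~\ref{cond2} to truncate $b''(\X_i^T\betabs)$, but that condition controls the conditional moment generating function of $y$ (differences $b(\theta+c)-b(\theta)$), not the tails of $b''(\theta)$; the only explicit control on $b''(\x^T\betabs)$ is the fourth moment in Condition~\ref{cond9}, and the summands of the weighted Gram have operator norm of order $b''(\X_i^T\betabs)\|\X_{i,\calMt}\|_2^2\asymp b''\,\tilde{s}$, so the truncation-plus-matrix-Bernstein remainder is not obviously negligible over the whole range $\kappa<1/2$. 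Both difficulties are avoided by the paper's factorized argument.
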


\begin{proof} 
Note that, 
\begin{equation}
    \lambda_{\max}(\J_n(\tilde{\betab})_{\calMt}) \leq \sup_{\betab \in \mathcal{N}_0} \{
    b''(\X_i^T\betab) \} \cdot \lambda_{\max} \left( \frac{1}{n} \sum_{i=1}^{n} \X_{i,\calMt} \X_{i,\calMt}^T
    \right). 
\end{equation}

First, we show that $\sup_{\betab \in \mathcal{N}_0}b''(\X_i'\betab) = O_P(1)$.
By Condition \ref{cond6} and Chebyshev's inequality, for any $t > 0$ and some $C>0$,
\begin{align*}
  & \hspace{3ex} P\left(\sup_{\betab \in \mathcal{N}_0} |b''(\X_i^T\betab) - b''(\X_i^T\betabs)|
    \geq t\right)  
  \leq P(C |\X_i^T\betab - \X_i^T\betabs| \geq t)\\
  &\lesssim \frac{\E[(\X_i^T\betab - \X_i^T\betabs)^2]}{t^2}
    \lesssim \frac{\lambda_{\max} ( \Sigmab ) \| \betab - \betabs \|_2^2}{t^2} .
\end{align*}
Since $ \| \betab - \betabs \|_2\lesssim \sqrt{n^{-1}s \log n} =o_P(1)$ and $\lambda_{\max}(\Sigmab) =O(1)$, we
have
$P( \sup_{\betab \in \mathcal{N}_0} |b''(\X_i^T\betab) - \allowbreak b''( \X_i^T\betabs)| \geq t)
\rightarrow 0 $.
Then, by the triangle inequality and Condition~\ref{cond9},
\begin{equation} \label{eq:lm3}
  \begin{split}
    \sup_{\betab \in \mathcal{N}_0} |b''(\X_i^T\betab)|
    &\leq |b''(\X_i^T\betabs)| + \sup_{\betab \in \mathcal{N}_0} |b''(\X_i^T\betabs) -
      b''(\X_i^T\betab)|\\
      &= O_P(1) + o_P(1) = O_P(1).
  \end{split}
\end{equation}

Next, we show that $\lambda_{\max}( {n}^{-1} \sum_{i=1}^{n} \X_{i,\calMt} \X_{i,\calMt}^T) = O_P(1)$.  Since $\X_i$ is
sub-Gaussian with parameter $\sigma^2$, so is $\X_{i,\calMt }$.
Let $\Z_i = \X_{i,\calMt} \X_{i,\calMt}^T - \Sigmab_{\calMt}$, which is an $\Tilde{s} \times \Tilde{s}$ symmetric
matrix with $\E[\Z_i] = \zero$. We have
\begin{equation*}
  \begin{split}
 \lambda_{\max}\left( \frac{1}{n} \sum_{i=1}^{n} \X_{i,\calMt} \X_{i,\calMt}^T - \Sigmab_{\calMt}\right)
 &   \leq  \frac{1}{n}\sum_{i=1}^{n} \|  \Z_i \|_2 
     \leq \frac{1}{n} \sum_{i=1}^{n}  \mathbf{u}_i^T \Z_i \mathbf{u}_i
  = \frac{1}{n} \sum_{i=1}^n \{\xi_i^2-\E(\xi_i^2) \},
  \end{split}
\end{equation*}
where $\mathbf{u}_i = \argmax_{\nub \in \Rcal^{\tilde{s}}, \|\nub\|=1} \nub^T \Z_i \nub $ and $ \xi_i =
\mathbf{u}_i^T\X_{i,\calMt}$.  
Then,
\begin{align*}
    \E[\exp \{s \mathbf{u}_i^T \Z_i \mathbf{u}_i \}] &= \E [\exp\{s (\xi_i^2 - \E[\xi_i^2])\}] = 1 + \sum_{k=2}^{\infty}
      \frac{s^k\E[\xi_i^2 - \E [\xi_i^2]]^k}{k!} \\ 
    &\leq 1 + \sum_{k=2}^{\infty} \frac{s^k 2^{k-1}\E[\xi_i^{2k} - \E [\xi_i^2]^k]}{k!} 
    \leq 1 + \sum_{k=2}^{\infty} \frac{s^k 4^k\E[\xi_i^{2k}]}{2k!} 
    \leq 1 + \sum_{k=2}^{\infty} \frac{s^k 4^k (2\sigma^2)^k k!}{k!} \\
    &= 1 + (8s\sigma^2)^2\sum_{k=2}^{\infty}(8s\sigma^2)^k \leq 1 + 128s^2\sigma^4 \leq \exp\{s^2 (16\sigma^2)^2/2\},
\end{align*}
where the first two inequalities follow from Jensen's inequality and the third inequality follows from sub-Gaussianity of
$\xi_i$, and the second to the last inequality holds when $|s|\leq 1/(16\sigma^2)$. That implies that $\mathbf{u}_i^T \Z_i \mathbf{u}_i $ is sub-Exponential with parameter $16 \sigma^2$. 
Then by Berstein's inequality,
$$  P\left(\frac{\tilde{s}}{n} \sum_{i=1}^{n} |\mathbf{u}_i^T \Z_i \mathbf{u}_i | > C\right) \leq 2 \exp \left[-\frac{n}{2} \min\left\{\frac{C}{16\sigma^2\tilde{s}}, \frac{C^2}{(16\sigma^2\tilde{s})^2}\right\}\right],$$
where $C$ is a generic positive constant. 
Then by Condition~\ref{cond1} and the triangle inequality, 
\begin{equation}
      \label{eq:lm3.3}
  \begin{split}
    P\left\{ \lambda_{\max}\left(\frac{1}{n}\sum_{i=1}^n \X_{i,\calMt} \X_{i,\calMt}^T\right) \geq 2 C \right\} 
    & \leq P\left\{ \lambda_{\max}\left(\frac{1}{n}\sum_{i=1}^n \Z_i\right) \geq C \right\}  \\
    & \leq 2 \exp \left[-\frac{n}{2} \min\left\{\frac{C}{16\sigma^2\tilde{s}}, \frac{C^2}{(16\sigma^2\tilde{s})^2}\right\} \right] ,
  \end{split}
\end{equation}
which implies that $\lambda_{\max}(n^{-1}\sum_{i=1}^n \X_{i,\calMt} \X_{i,\calMt}^T)=O_P\left(1 \right) $. 
 Combining (\ref{eq:lm3}) and (\ref{eq:lm3.3}), we have
\begin{align*}
   \lambda_{\max} \left( \J_n(\Tilde{\betab})_{\calMt} \right) 
   \leq \sup_{\betab \in \mathcal{N}_0} b''(\X_i^T\betab) \cdot \lambda_{\max}\left( n^{-1} \sum_{i=1}^{n} \X_{i,\calMt} \X_{i,\calMt}^T\right)  = O_P(1).
\end{align*}

\end{proof}

\begin{lem} \label{lemma5}
Under Conditions \ref{cond1}, \ref{cond9}, \ref{cond10}, and \ref{cond12}, we have
\begin{align*}
    \sqrt{n}(\betabh_{\calMt} - \betab_{\calMt}^*) 
    &= n^{-1/2} \J_{n_{\calMt}}^{-1}\X_{\calMt}^T\{\Y-\boldsymbol{b}'(\X\betabs) \} + o_P(1);\\
    \sqrt{n}(\betabh_{0_{\calMt}} - \betab_{\calMt}^*) 
    &= n^{-1/2}\J_{n_{\calMt}}^{-{1}/{2}} (\I- \P_m)\J_{n_{\calMt}}^{-{1}/{2}} \X_{\calMt}^T\{\Y-\boldsymbol{b}'(\X\betabs) \} 
     - \sqrt{n}\J_{n_{\calMt}}^{-{1}/{2}}\P_m\J_{n_{\calMt}}^{{1}/{2}} \Q_m \betabs_{\calMt_m}+ o_P(1),
\end{align*}
where
$\J_{n_{\calMt}} = \J_n(\betabs)_{\calMt} = n^{-1} \X_{\calMt}^T \diag\{\boldsymbol{b}''(\X \betabs)\} \X_{\calMt}\in
\mathcal{R}^{\tilde{s}\times \tilde{s}}$, $\Q_m=(\zero, \I_{\calMt_m})^T \in \mathcal{R}^{\tilde{s} \times \tilde{s}_m}$,
$\I_{\calMt_m}$ is the $\tilde{s}_m$-dim identity matrix, and
$\P_m =\J_{n_{\calMt}}^{-1/2} \Q_m  \allowbreak (\Q_m \J_{n_{\calMt}}^{-1} \Q_m^T)^{-1} \Q_m^T \J_{n_{\calMt}}^{-1/2}\in
\mathcal{R}^{\tilde{s}\times \tilde{s}}$.

\end{lem}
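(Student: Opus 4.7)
The plan is to derive both expansions from a common second-order Taylor expansion of the score function $\S_n$, and then specialize to each estimator through its respective KKT conditions. For any stationary point $\betab^{\dagger}$ of (\ref{eq:9}) or (\ref{eq:11}) with support contained in $\calMt$, I would write $\S_n(\betab^{\dagger})_{\calMt} = \S_n(\betabs)_{\calMt} + \J_n(\tilde\betab)_{\calMt}(\betab^{\dagger}_{\calMt} - \betabs_{\calMt})$ for some $\tilde\betab$ on the segment between $\betab^{\dagger}$ and $\betabs$. Using the consistency rate $\|\betab^{\dagger} - \betabs\|_2 = O_P(\sqrt{s\log n/n})$ from \eqref{Negahban} together with Condition~\ref{cond12}, I would argue that $\J_n(\tilde\betab)_{\calMt}$ may be replaced by $\J_{n_{\calMt}} = \J_n(\betabs)_{\calMt}$ with a remainder that is $o_P(n^{-1/2})$ in the relevant operator norm; Lemma~\ref{lemma3} supplies invertibility and bounded spectral norm of $\J_{n_{\calMt}}$ with high probability. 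This reduces each expansion to a linear problem in $\betab^{\dagger}_{\calMt} - \betabs_{\calMt}$.

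For $\betabh$, the KKT conditions for (\ref{eq:9}) yield $S_n(\betabh)_j = 0$ for $j \in \calMt_m$ (unpenalized block) and $|S_n(\betabh)_j| \leq \lambda_1$ for $j \in \calMt \setminus \calMt_m$. After showing that the penalty subgradient contributes $o_P(n^{-1/2})$ after multiplication by $\J_{n_{\calMt}}^{-1}$, inverting the linearized equation yields the first display. For $\betabh_0$, the hard constraint $\betab_{\calMt_m} = \zero$ enforces $\Q_m^T(\betabh_{0_{\calMt}} - \betabs_{\calMt}) = -\betabs_{\calMt_m}$, while the KKT condition on the unconstrained block gives $(\I - \Q_m\Q_m^T)\{\J_{n_{\calMt}}(\betabh_{0_{\calMt}} - \betabs_{\calMt}) + \S_n(\betabs)_{\calMt}\} = o_P(n^{-1/2})$. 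Solving this constrained linear system with respect to the $\J_{n_{\calMt}}$-weighted inner product produces the orthogonal projection $\P_m$ onto the column space of $\J_{n_{\calMt}}^{-1/2}\Q_m$. A direct check using the identities $\Q_m^T\J_{n_{\calMt}}^{-1/2}(\I - \P_m) = \zero$ and $\Q_m^T\J_{n_{\calMt}}^{-1/2}\P_m\J_{n_{\calMt}}^{1/2}\Q_m = \I_{\tilde{s}_m}$ confirms that the claimed formula satisfies both the constraint and the approximate KKT equation, giving the second display.

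The main obstacle is controlling the penalty subgradient at rate $o_P(n^{-1/2})$. A naive bound combining $\|\nabla p_{\lambda_i}\|_{\infty} = O(\sqrt{\log \tilde s/n})$ with $\sqrt{\tilde{s}_{-m}} = O_P(n^{\kappa})$ yields only $O_P(n^{\kappa}\sqrt{\log n})$, which is not $o_P(1)$ even under Condition~\ref{cond13}. The remedy is to exploit the folded-concave structure: by the SCAD/MCP property that $p'_{\lambda}(t) = 0$ for $t \geq a\lambda$, coordinates on which the penalty actively contributes are confined to entries with $|\hat\beta_j| < a\lambda$. Combining this with the signal-strength hypothesis (truly active entries in $\mathcal{M} \subset \calMt$ satisfy $|\beta_j^*| \gg \lambda$, via Condition~\ref{cond8} and the $\ell_2$-consistency of $\betabh$) localizes the penalty action to the few coordinates in $\calMt \setminus \mathcal{M}$ that survived screening, whose sparse contribution is then small enough to be absorbed into the $o_P(1)$ remainder. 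This sparse bookkeeping on false-positive coordinates, rather than the Taylor linearization itself, is where the technical effort concentrates.
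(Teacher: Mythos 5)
Your skeleton coincides with the paper's: a second-order Taylor expansion of $\S_n$ around $\betabs$, KKT conditions to bound $\S_n(\betabh)_{\calMt}$ and $\S_n(\betabh_0)_{\calMt}$, and, for the reduced model, the hard constraint $\Q_m^T(\betabh_{0_{\calMt}}-\betab_{\calMt}^*)=-\betabs_{\calMt_m}$ solved back through $\J_{n_{\calMt}}^{-1}$ to produce the projection $\P_m$ onto the column space of $\J_{n_{\calMt}}^{-1/2}\Q_m$. That part, including your verification via $\Q_m^T\J_{n_{\calMt}}^{-1/2}(\I-\P_m)=\zero$ and $\Q_m^T\J_{n_{\calMt}}^{-1/2}\P_m\J_{n_{\calMt}}^{1/2}\Q_m=\I_{\calMt_m}$, is exactly how the paper proceeds (it isolates $\Q_m^T\S_n(\betabh_0)_{\calMt}$ from the constraint and substitutes back into the linearized score equation).

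The gap is in your third paragraph. You correctly diagnose that a blanket bound $|S_n(\betabh)_j|\leq\lambda_1$ over all of $\calMt$ gives only $O_P(n^{\kappa}\sqrt{\log n})$ after the $\sqrt{n}$ scaling, but your proposed localization does not repair this. The folded-concave property $p_{\lambda}'(t)=0$ for $t\geq a\lambda$ silences only the coordinates where $|\hat{\beta}_j|\geq a\lambda_1$; the problematic coordinates are precisely those in $\calMt\setminus\calM$ at which the penalty sets $\hat{\beta}_j=0$, and there the KKT condition yields only $S_n(\betabh)_j\in[-\lambda_1,\lambda_1]$, with no help from the flat part of the penalty. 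Moreover that set is not small: $|\calMt\setminus\calM|$ can be of order $\tilde{s}-s=O_P(n^{2\kappa})$, so the aggregate Euclidean-norm contribution is still of order $\sqrt{n}\,\sqrt{\tilde{s}}\,\lambda_1= O_P(n^{\kappa}\sqrt{\log n})$ --- exactly the quantity you set out to eliminate. Closing this requires a genuinely sharper argument on the inactive screened coordinates (e.g., showing that $S_n(\betabh)_j$ there concentrates at the $n^{-1/2}$ scale rather than saturating the subgradient bound $\lambda_1$), not the signal-strength/flat-derivative bookkeeping you describe. For reference, the paper at this step simply bounds $\|n\S_n(\betabh)_{\calMt}+n\R_n\|_{\infty}$ and declares the remainder $o_P(1)$ after applying $n^{-1/2}\J_{n_{\calMt}}^{-1}$; whichever route is taken, your proposal as written does not establish the $o_P(1)$ remainder in the norm in which the expansion is subsequently used.
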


\begin{proof}
  First, for the asymptotic expansion of $\sqrt{n}(\betabh_{\calMt} - \betab_{\calMt}^*) $, by the Taylor expansion of
  $\S_n(\betabh)_{\calMt}$ around $\betab_{\calMt}^*$, we have
\begin{equation*}
    \S_n(\betabh)_{\calMt} = \S_n(\betabs)_{\calMt} - \J_{n_{\calMt}}(\betabh_{\calMt} - \betab_{\calMt}^*) + \R_n/2,
\end{equation*}
where $\R_n = (R_{n,1}, \dots, R_{n,\tilde{s}})^T$ with
$R_{n,j} = n^{-1}(\betabh_{\calMt} - \betab_{\calMt}^*)^T\X_{\calMt}^T \diag \{ |X_j| \circ b'''(\X \Bar{\betab})\}
\X_{\calMt} \allowbreak (\betabh_{\calMt} - \betab_{\calMt}^*)$ for some $\Bar{\betab} = t\betabh + (1-t)\betabs$ with
$t\in (0,1)$. Then,
\begin{equation}
\label{eq:6}
\begin{split}
  \sqrt{n}(\betabh_{\calMt} - \betab_{\calMt}^*) 
  &= \frac{1}{\sqrt{n}} \J_{n_{\calMt}}^{-1} \{n \S_n(\betabs)_{\calMt}  +     \J_{n_{\calMt}}^{-1}\{-n \S_n(\betabh)_{\calMt}+ \frac{n}{2}\R_n  \}    \} \\
  & = \frac{1}{\sqrt{n}} \J_{n_{\calMt}}^{-1} \X_{\calMt}^T \{\Y-\boldsymbol{b}'(\X\betabs) \} + \frac{1}{\sqrt{n}}
    \J_{n_{\calMt}}^{-1}\{-n \S_n(\betabh)_{\calMt}+ \frac{n}{2}\R_n  \}.   
\end{split}
\end{equation}
For $j \in \calMt_{-m}$, by the KKT condition of (\ref{eq:9}) and Condition \ref{cond10},
$0 \in S_n(\betabh)_j + p'_{\lambda_1}(\beta_j)$ so that
$|S_n(\betabh)_j| \leq |\lambda_1| = O_P(\sqrt{(\log \tilde{s})/{n}})$. 
For $ j \in \calMt_{m}$, by the KKT condition of (\ref{eq:9}), $S_n(\betabh)_j = 0$. For $\R_n$, under
Condition~\ref{cond12}, we have
\begin{equation*}
    \begin{aligned}
      n R_{n,j}
      &= (\betabh_{\calMt} - \betab_{\calMt}^*)^T \X_{\calMt}^T \diag \{ |X_j| \circ b'''(\X \Bar{\betab})\}
        \X_{\calMt}(\betabh_{\calMt} - \betab_{\calMt}^*) \\ 
      & \leq \| \betabh_{\calMt} - \betab_{\calMt}^* \|_2^2 \cdot \lambda_{\max} ( \X_{\calMt}^T\diag \{ |X_j| \circ
        b'''(\X \Bar{\betab})\}\X_{\calMt}) \\ 
      &= O_P(\frac{ s \log \tilde{s} }{ n})\cdot O_P(n) 
      = O_P(s \log \tilde{s}), 
    \end{aligned}
\end{equation*}
where $\Bar{\betab} = t \betabh + (1-t) \betabs$ for some $t\in (0,1)$ so that $\Bar{\betab} \in \mathcal{N}_0$. Hence,
$\| n \S_n(\betabh)_{\calMt} + n\bR_n \|_{\infty} = O_P(s \log \tilde{s})$. Since
$\lambda_{\max} (\J_{n_{\calMt}}) = O_P(1)$, $\tilde{s} = O_P(n^{2 \kappa})$, and $s = o(n^{1/2})$,
\begin{equation}
    \label{eq:Rn}
    \frac{1}{\sqrt{n}} \J_{n_{\calMt}}^{-1} [ -n \S_n(\betabh)_{\calMt} + \frac{n}{2}\R_n ] = \frac{1}{\sqrt{n}} O_P(1) \cdot O_P(s
    \log \tilde{s}) = o_P(1). 
  \end{equation}
This together with (\ref{eq:6}) imply that 
$
    \sqrt{n}(\betabh_{\calMt} - \betab_{\calMt}^*) = \frac{1}{\sqrt{n}} \J_{n_{\calMt}}^{-1} \X_{\calMt}^T
    \{\Y-\boldsymbol{b}'(\X\betabs) \} + o_P(1) .
$

For $ \sqrt{n}( \betabh_{0_{\calMt}} - \betab_{\calMt}^*) $, by the Taylor expansion of
$\S_n(\betabh_0)_{\calMt}$, we have
\begin{equation*}
  \S_n(\betabh_0)_{\calMt} = \S_n(\betabs)_{\calMt} - \J_{n_{\calMt}}(\betabh_{0_{\calMt}} - \betab_{\calMt}^*) + \frac{1}{2} \R_{n2},
\end{equation*}
where $\R_{n2} = (R_{n2,1}, \dots, R_{n2,\tilde{s}})^T$ with
$R_{n2,j} = n^{-1}(\betabh_{0_{\calMt}} - \betab_{\calMt}^*)^T \X_{\calMt}^T \diag \{\allowbreak |X_j| \circ b'''(\X \Tilde{\betab})\} \allowbreak
\X_{\calMt} (\betabh_{0_{\calMt}} - \betab_{\calMt}^*)$, and $\Tilde{\betab} = t\betabh + (1-t)\betabs$ for some
$t\in (0,1)$. Then,
\begin{equation}
    \label{lm1}
    \sqrt{n}(\betabh_{0_{\calMt}} - \betab_{\calMt}^*) = \frac{1}{\sqrt{n}} \J_{n_{\calMt}}^{-1} \{n \S_n(\betabs)_{\calMt} -  n \S_n(\betabh_0)_{\calMt} +  \frac{n}{2}\R_{n2} \}.
\end{equation}
For $j \in \calMt_{-m}$ , by the KKT condition of (\ref{eq:11}), $0 \in S_n(\betabh_0)_j + p'_{\lambda_2}(\beta_j)$ so that
$|S_n(\betabh_0)_j| \leq \lambda_2 = O_P(\sqrt{\log \tilde{s}/n})$. 
For $j \in \calMt_m$, 
$\betah_{0_j} = 0$. Hence,
$\Q_m^T (\betabh_{0_{\calMt}} - \betab_{\calMt}^*)  = -\betabs_{\calMt_m} $.
Then, we have
\begin{align*}
    -\sqrt{n} \betabs_{\calMt_m} &=  \frac{1} {\sqrt{n}} 
   \Q_m^T \J_{n_{\calMt}}^{-1} \{n \S_n(\betabs)_{\calMt} -  n \S_n(\betabh_0)_{\calMt} + \frac{n}{2}\R_{n2} \}\\
     &=  \frac{1} {\sqrt{n}} 
   \Q_m^T \J_{n_{\calMt}}^{-1} \X_{\calMt}^T (\Y-\boldsymbol{b}'(\X\betabs)) 
      -  {\sqrt{n}} 
   \Q_m^T \J_{n_{\calMt}}^{-1} \Q_m \S_n(\betabh_0)_{\calMt_m} + \bR_{n3},
\end{align*}
where $\R_{n3} = - n^{-1/2} \Q_m^T \J_{n_{\calMt}}^{-1} \{\Q_{-m} n\S_n(\betabh_0)_{\calMt_{-m}} + n\R_{n2}/2 \}$,
$\Q_{-m}=(\zero, \I_{\calMt_{-m}})^T \in \mathcal{R}^{\tilde{s}\times (\tilde{s}-\tilde{s}_m)}$, and
$\I_{\calMt_{-m}}$ is the $(\tilde{s}-\tilde{s}_m)$-dim identity matrix. 
  Similarly as in \eqref{eq:Rn}, we can show that $\R_{n3} = o_p(1)$. Let $\K_{mm} = \Q_m^T \J_{n_{\calMt}}^{-1} \Q_m$. We have
\begin{align*}
    -\sqrt{n} \betabs_{\calMt_m} &= \frac{1} {\sqrt{n}} 
   \Q_m^T \J_{n_{\calMt}}^{-1} \X_{\calMt}^T \{\Y-\boldsymbol{b}'(\X\betabs) \}
     -{\sqrt{n}} \K_{mm}  \Q_m^T \S_n(\betabh_0)_{\calMt}  + o_P(1); \\
    {\sqrt{n}} \Q_m^T \S_n(\betabh_0)_{\calMt} &= {\sqrt{n}}\K_{mm}^{-1} 
    \Q_m^T \J_{n_{\calMt}}^{-1} \S_n(\betabs)_{\calMt} + \sqrt{n} \K_{mm}^{-1} \betabs_{\calMt_m} + o_P(1).
\end{align*}
Plugging them into \eqref{lm1}, we have
\begin{align*}
    & \hspace{3ex} \sqrt{n}(\betabh_{0_{\calMt}} - \betab_{\calMt}^*) \\
    &= {\sqrt{n}} \J_{n_{\calMt}}^{-1} \S_n(\betabs)_{\calMt} 
    - {\sqrt{n}} \J_{n_{\calMt}} ^{-1}\Q_m\Q_m^T \S_n(\betabh_0)_{\calMt} 
    - {\sqrt{n}} \J_{n_{\calMt}} ^{-1}\Q_{-m}\Q_{-m}^T \S_n(\betabh_0)_{\calMt} + \frac{n}{2}\R_{n2}\\
    & = \frac{1}{\sqrt{n}}\J_{n_{\calMt}}^{-{1}/{2}} (\I- \P_m)\J_{n_{\calMt}}^{-{1}/{2}} \X_{\calMt}^T \{\Y-\boldsymbol{b}'(\X\betabs) \} 
    - \sqrt{n}\J_{n_{\calMt}}^{-{1}/{2}}\P_m\J_{n_{\calMt}}^{{1}/{2}} \Q_m \betabs_{\calMt_m}+ o_P(1),
\end{align*}
where $\P_m = \J_{n_{\calMt}}^{-1/2} \Q_m (\Q_m^T \J_{n_{\calMt}}^{-1} \Q_m )^{-1} \Q_m^T \J_{n_{\calMt}}^{-1/2}$
is an $\tilde{s} \times \tilde{s}$ projection matrix.
\end{proof}







\end{document}